\let\csname equation*\endcsname\relax
\let\csname endequation*\endcsname\relax
\newcommand{\ol}[1]{\overline{#1}}
\newcommand{\ub}{\ol{u}}
\newcommand{\rb}{\mathbf{r}}
\newtheorem{theorem}{Theorem}
\theoremstyle{definition}
\newtheorem{mydef}{Definition}
\begin{document}

\title[Discontinuous shock solutions of the Whitham modulation equations]{Discontinuous shock solutions of the Whitham modulation equations as dispersionless limits of traveling waves}

\author{Patrick Sprenger and Mark A. Hoefer}
\address{University of Colorado Boulder, Boulder, CO, USA}
\ead{patrick.sprenger@colorado.edu and hoefer@colorado.edu}
\vspace{10pt}
\begin{indented}
\item[]July 2019
\end{indented}

\begin{abstract}
  Whitham modulation theory describes the zero dispersion limit of
  nonlinear waves by a system of conservation laws for the parameters
  of modulated periodic traveling waves. Here, admissible,
  discontinuous, weak solutions of the Whitham modulation
  equations---termed \emph{Whitham shocks}---are identified with zero
  dispersion limits of traveling wave solutions to higher order
  dispersive partial differential equations (PDEs). The far-field
  behavior of the traveling wave solutions satisfies the
  Rankine-Hugoniot jump conditions.  Generally, the numerically
  computed traveling waves represent heteroclinic connections between
  two periodic orbits of an ordinary differential equation.  The focus
  here is on the fifth order Korteweg-de Vries equation. The three
  admissible one-parameter families of Whitham shocks are used as
  solution components for the generalized Riemann problem of the
  Whitham modulation equations. Admissible KdV5-Whitham shocks are
  generally undercompressive, i.e., all characteristic families pass
  through the shock.  The heteroclinic traveling waves that limit to
  admissible Whitham shocks are found to be ubiquitous in numerical
  simulations of smoothed step initial conditions for other higher
  order dispersive equations including the Kawahara equation (with
  third and fifth order dispersion) and a nonlocal model of weakly
  nonlinear gravity-capillary waves with full dispersion. Whitham
  shocks are linked to recent studies of nonlinear higher order
  dispersive waves in optics and ultracold atomic
  gases. 
  The approach presented here provides a novel method for constructing
  new traveling wave solutions to dispersive nonlinear wave equations
  and a framework to identify physically relevant, admissible shock
  solutions of the Whitham modulation equations.
\end{abstract}

%
%
%
%
%

\section{Introduction}
Nonlinear dispersive wave equations are host to numerous solutions
ranging from steady periodic traveling waves and solitary waves to
unsteady dispersive shock waves (DSWs). The canonical model equation
that includes third order (long wave) dispersion is the Korteweg-de
Vries (KdV) equation
\begin{equation}
  \label{eq:KdV}
  u_t + uu_x + u_{xxx} = 0 .
\end{equation}
In their seminal paper, Gurevich and Pitaevskii (GP)
\cite{gurevich_nonstationary_1974} construct the KdV DSW by solving a
Riemann problem consisting of step initial conditions for the
KdV-Whitham modulation equations. The KdV-Whitham modulation equations
are a first order, quasi-linear system of three conservation laws for
the slow evolution of a KdV nonlinear periodic traveling wave's
parameters \cite{whitham_non-linear_1965}. Despite the apparent
contradiction of applying the Whitham modulation equations---obtained
by averaging conservation laws of the KdV equation under the
assumption of \textit{long} wavelength modulations---to discontinuous
initial data, the GP DSW modulation solution is not a discontinuous
shock but rather consists of a self-similar, centered rarefaction wave
solution (a 2-wave \cite{el_dispersive_2016-1}) between two constant
levels. In a series of papers, Lax, Levermore
\cite{lax_zero_1979,lax_small_1983-1,lax_small_1983-2,lax_small_1983}
and Venakedis
\cite{venakides_zero-dispersion_1985,venakides_korteweg-vries_1990}
utilized the inverse scattering transform to prove that the
KdV-Whitham system (and their multiphase generalizations
\cite{flaschka_multiphase_1980}) describe the zero dispersion limit of
the KdV equation. The limit was shown to be weak in the $L_2$ sense
and thus provides a rigorous justification for Whitham's approach.

Because the KdV-Whitham equations are strictly hyperbolic
\cite{levermore_hyperbolic_1988}, a generic class of initial data will
necessarily lead to singularity formation and multivalued solution
regions.  Singularity formation in the KdV-Whitham equations is
regularized by adding an additional phase to the multiphase modulation
solution.  The simplest case is breaking of the dispersionless Hopf
equation (the zero-phase KdV-Whitham equation) for the mean flow that
is regularized by the addition of an oscillatory region that emanates
from a point of gradient catastrophe in the $x$-$t$ plane and is
described by the one-phase KdV-Whitham equations for modulated
periodic waves
\cite{gurevich_breaking_1991,gurevich_evolution_1992,tian_initial_1994}.
The boundaries of this modulated, one-phase region matches
continuously to the zero-phase solution, which generalizes the GP step
problem to smooth initial conditions.  Singularity formation in the
one-phase Whitham equations similarly leads to $x$-$t$ regions that
are described by the two-phase Whitham equations
\cite{flaschka_multiphase_1980} and match continuously to the solution
of the one-phase Whitham equations \cite{grava_generation_2002}.
Generally, compressive waves in the KdV-Whitham equations that lead to
gradient catastrophe are regularized in terms of global, continuous
expansion waves.  This is also the case for other integrable systems
such as the defocusing nonlinear Schr\"{o}dinger equation
\cite{el_general_1995} so that one can regularize singularity
formation in the Whitham equations by an appropriate choice of
degenerate initial conditions for sufficiently large phase modulations
that yield global, continuous solutions
\cite{bloch_dispersive_1992,kodama_whitham_1999,biondini_whitham_2006}.

Despite our understanding of singularity formation in the Whitham
equations for integrable systems, there remains an outstanding
question regarding discontinuous shock solutions.  Do discontinuous,
weak solutions of the Whitham equations have any meaning vis-a-vis the
governing partial differential equation (PDE) and the physical system
that it models?  One standard approach in the conservation law
community is to identify discontinuous shock solutions of hyperbolic
systems as \textit{admissible} if they are the pointwise limit of
traveling wave (TW) solutions to the dissipatively or
dissipative-dispersively perturbed conservation law as dictated by the
microscopic physics of the problem
\cite{lefloch_hyperbolic_2002,dafermos_hyperbolic_2009}.  The TW
profiles are heteroclinic, equilibrium-to-equilibrium solutions of a
stationary ordinary differential equation (ODE).  This approach leads
to both admissible compressive Lax shocks
\cite{lax_hyperbolic_1973,whitham_linear_1974} and admissible
nonclassical undercompressive shocks \cite{jacobs_travelling_1995}.
But the Whitham modulation equations are the zero dispersion limit of
a conservative PDE so that there is no justification for introducing
dissipative perturbations to the Whitham equations.

Instead, we expand the collection of TWs and prove that if TW
solutions of a conservative PDE consist of heteroclinic connections
between an equilibrium or a periodic orbit and another periodic orbit
exist, their far-field behavior satisfies the Rankine-Hugoniot
relations for discontinuous, shock solutions of the Whitham modulation
equations.  
This enables us to define admissible shock solutions of the Whitham
modulation equations---\textit{Whitham shocks}---as (weak) zero
dispersion limits of oscillatory, heteroclinic TW solutions. The
solution of the governing PDE corresponding to a Whitham shock
consists of two disparate oscillatory waves that are connected by a
transition region occurring over the length scale of a single
oscillation, the dispersive coherence length
\cite{el_dispersive_2016-1}.

A phase plane analysis shows that the KdV equation \eqref{eq:KdV} does
not admit heteroclinic TW solutions \cite{el_dispersive_2017}.
Consequently, the Whitham shocks studied here are not admissible
solutions of the KdV-Whitham equations.  Instead, we consider a
different class of nonlinear dispersive equations.
 
While KdV is a universal model of DSWs in convex media
\cite{el_dispersive_2016-1}, modifications to it are required when
higher order, e.g., short wavelength, or large amplitude, effects
occur. The KdV equation is then modified to include higher order
dispersive and/or nonlinear terms that more accurately model the
underlying physics. This manuscript focuses on the implications of
higher order and nonlocal dispersive effects on both steady, traveling
wave solutions as well as unsteady DSWs. A canonical model
incorporating higher order dispersion and weak nonlinearity is the
fifth order Korteweg-de Vries equation (KdV5)
\begin{equation}\label{eq:kdv5}
  u_t + uu_x + u_{xxxxx} = 0.
\end{equation}
The dispersionless limit of the KdV5 equation \eqref{eq:kdv5} which
will be of use throughout this manuscript is made explicit via the
hydrodynamic variable transformation $X = \epsilon x$,
$T = \epsilon t$, and $U(X,T;\epsilon) = u(X/\epsilon,T/\epsilon)$.
Substitution of these variables into \eqref{eq:kdv5} yields
\begin{equation}\label{eq:kdv5_dispersionless}
  U_T + UU_X + \epsilon^4 U_{XXXXX} = 0,
\end{equation}
so that the singular dispersionless limit is achieved by passing
$\epsilon \to 0$.

In a typical multi-scale asymptotic expansion for weakly nonlinear,
long waves, the dispersion relation expansion for small wavenumber
involves successively smaller terms. In this scenario, cubic
dispersion typically dominates, which is the reason the KdV equation
is said to be universal. However, when successive terms are
comparable, e.g., third and fifth order dispersion near an inflection
point \cite{sprenger_shock_2017}, then higher order dispersion is
operable. The KdV5 equation \eqref{eq:kdv5} is an example in which
fifth order dispersion dominates over all others, but other well known
examples also exist. A long wave, nonlinear model that incorporates
the competition between third and fifth order dispersion is the
Kawahara equation \cite{kawahara_oscillatory_1972}
\begin{equation}\label{eq:kawahara}
  u_t + uu_x + \alpha u_{xxx} + u_{xxxxx} = 0,
\end{equation}
where $\alpha \in \mathbb{R}$ is a parameter. Higher order dispersive
effects can also occur when full, or nonlocal dispersion is
included. A class of models in this vein is
\begin{equation}\label{eq:whitham_eq_intro}
  u_t + uu_x + \mathcal{K} * u_x = 0, \quad \mathcal{K}(x) = \frac{1}{2\pi}\int_{-\infty}^{\infty} \frac{\omega(k)}{k } e^{i k  x} dk, 
\end{equation}
where $\omega(k)$ is the linear dispersion relation and 
\begin{align}\label{eq:whit_kernel}
  (K * u_x)(x,t) = \int_\mathbb{R}K(x-y)u_y(y,t) dy.
\end{align}
This model was originally proposed by Whitham as a weakly nonlinear,
fully dispersive model of water waves \cite{whitham_variational_1967,
  whitham_linear_1974} and has since been called the Whitham equation,
not to be mistaken for the aforementioned Whitham modulation
equations. Since then, this model has been used in applications to
understand the propagation of water waves with surface tension
\cite{dinvay_whitham_2017}. The dispersion relation for
gravity-capillary water waves is
\begin{align*}
  \omega_{\rm ww}(k) &= \sqrt{ k(1 + B k^2) \tanh k}
  \\ &\sim k +\frac{3B - 1}{6}k^3 + \frac{(19-15 B (3 B+2))}{360}k^5 +
       \ldots, \quad 0 < k \ll 1
\end{align*}
so that the Fourier transform \eqref{eq:whitham_eq_intro} and
convolution \eqref{eq:whit_kernel} should be considered in the
distributional sense
\cite{ehrnstrom_traveling_2009,ehrnstrom_existence_2012}.  For Bond
number $B$ (the ratio of surface tension to gravity forces) near
$1/3$, higher order dispersion is important. Generally, whenever
linear dispersion exhibits an inflection point for nonzero wavenumber
$k$, higher order dispersive effects take precedence
\cite{sprenger_shock_2017}.

Dispersive systems with higher order dispersion occur in a variety of
applications, for example, gravity-capillary water waves
\cite{hunter_existence_1988,dias_fully-nonlinear_2010,matsuno_hamiltonian_2015}
and flexural ice sheets
\cite{marchenko_long_1988,guyenne_forced_2014,dinvay_whitham_2019},
nonlinear optical systems
\cite{wai_nonlinear_1986,webb_generalized_2013,smyth_dispersive_2016,el_radiating_2016},
and Bose-Einstein condensates \cite{khamehchi_negative-mass_2017} all
exhibit a change in dispersion curvature for sufficiently short
waves. Nonlinear PDEs with higher order or nonlocal dispersion admit
solutions wholly different from their lower order dispersive
counterparts. Examples include multi-mode periodic waves (traveling
waves in which two wavenumbers are resonant)
\cite{trichtchenko_instability_2016,gao_asymmetric_2016,remonato_numerical_2017},
solitary waves consisting of multiple pulses
\cite{champneys_homoclinic_1998}, solitary waves with decaying
oscillatory tails
\cite{kawahara_oscillatory_1972,grimshaw_solitary_1994}, and solitary
waves accompanied by co-propagating small, but finite, amplitude
oscillations spatially extending to infinity
\cite{hunter_solitary_1983,akylas_solitary_1992,clamond_plethora_2015}.
This rich zoology of traveling waves (TWs) can be attributed to
additional degrees of freedom inherent in the higher order ODE that
results from a TW ansatz. In this paper, we identify new TW solutions
to further expand this zoology.

The increased variety of solutions to higher order dispersive PDEs is
not limited to steady solutions. The unsteady dynamics that result
from dispersively regularized gradient catastrophe, most simply
embodied by the step initial data
\begin{align}
  \label{eq:riemann_data}
  u(x,0) = \begin{cases} \Delta & x < 0\\ 0 & x \geq 0 \end{cases} ,
                                              \quad \Delta > 0
\end{align} 
for, e.g., equations \eqref{eq:kdv5} and \eqref{eq:kawahara}, are
notably different from lower order dispersive systems such as KdV
\eqref{eq:KdV}.  Previous numerical and asymptotic studies have
jointly identified the following distinct, unsteady regimes for
Riemann problems of higher order or nonlocal PDEs:
\begin{enumerate}
\item DSW implosion: above a critical jump height, the DSW's harmonic
  wave edge becomes modulationally unstable and a multiphase structure
  emerges \cite{lowman_dispersive_2013,maiden_modulations_2016}.
\item Radiating DSW (RDSW): a perturbed convex DSW with exponentially
  small radiation emitted from the solitary wave edge. This results
  from a linear resonance between the solitary wave and shorter linear
  waves \cite{el_radiating_2016,sprenger_shock_2017}.
\item Traveling DSW (TDSW): a rapid transition from an equilibrium
  level to a comoving nearly periodic wavetrain slowly transitions to
  a constant level via a partial DSW from the intermediate periodic
  wavetrain to a background constant
  \cite{conforti_resonant_2014,sprenger_shock_2017,hoefer_modulation_2019}. See
  Figure \ref{fig:TDSW_intro} for an example TDSW solution of the KdV5
  equation.
\item Crossover DSW: can be thought of as a combination of the RDSW
  and TDSW, where the linear resonance begins to grow to a finite
  amplitude wavetrain and the wave dynamics are more complex
  \cite{el_radiating_2016,sprenger_shock_2017}.
\end{enumerate}
The original motivation for this manuscript was to fully describe the
TDSW solution in Fig.~\ref{fig:TDSW_intro} in terms of modulation
theory.  As we show in Sec.~\ref{sec:applications}, the TDSW can be
described in terms of a shock-rarefaction solution to the KdV5-Whitham
and the Kawahara-Whitham modulation equations.  However, during our
research, a broader theme emerged and revealed a host of novel
heteroclinic and homoclinic TW solutions as well as a Whitham
modulation theory framework to interpret them.


\begin{figure}
\begin{center}\label{fig:TDSW_intro}
\includegraphics[scale=0.45]{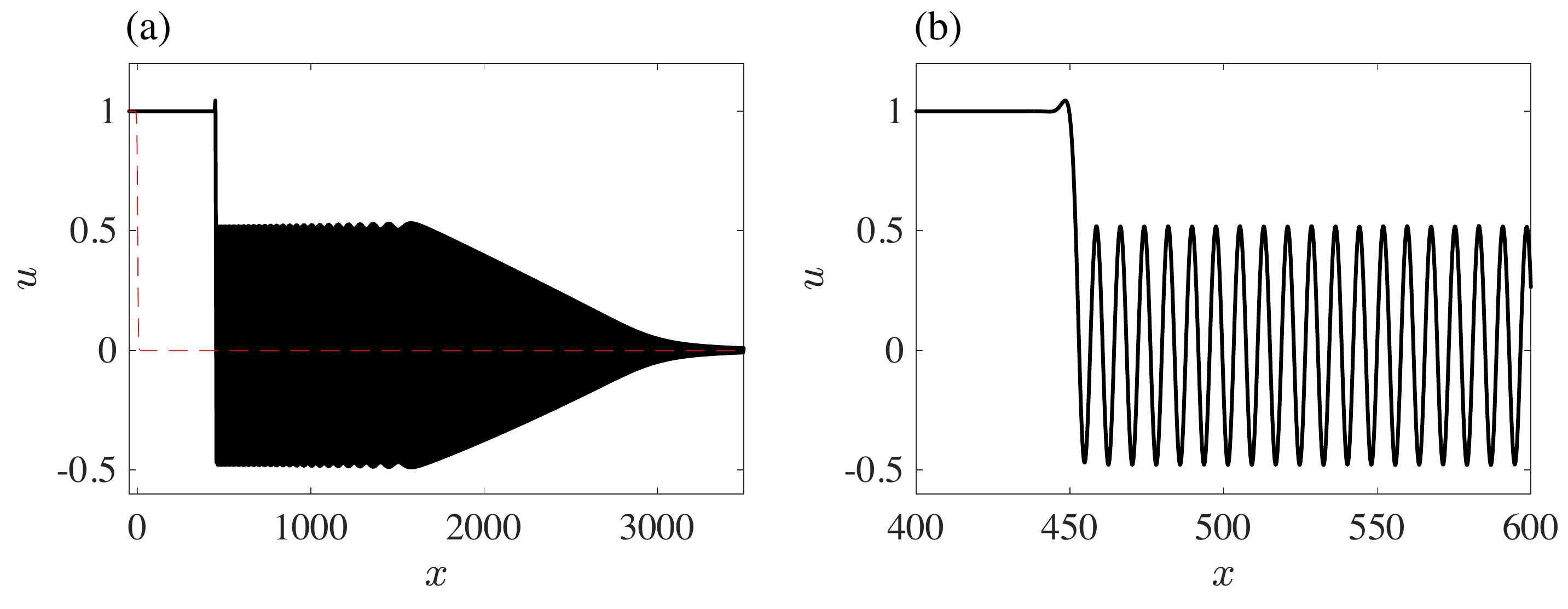}
\caption{Numerical simulation of the Riemann problem
  \eqref{eq:riemann_data} with $\Delta = 1$ for the KdV5 equation
  \eqref{eq:kdv5}. (a) Solution at $t = 1000$ (solid) for a TDSW and
  the initial step (dashed). (b) A zoom in of the trailing edge that
  resembles an equilibrium heteroclinic to a periodic orbit traveling
  wave solution to the KdV5 equation \cite{hoefer_modulation_2019}. }
\end{center}
\end{figure}

A recent preprint considers the so-called generalized Riemann problem
for the Whitham modulation equations of the Serre-Greene-Naghdi (SGN)
equations that model fully nonlinear shallow water waves
\cite{gavrilyuk_generalized_2018}. In particular, the numerical
evolution of initial conditions consisting of one periodic wave that
rapidly transitions to another periodic wave was shown as evidence
that discontinuous, weak solutions of the SGN-Whitham equations can be
realized. Emerging from this class of initial data were structures
that resemble genuine traveling wave solutions of the governing PDE.
The dispersion relation for the SGN equations is
$\omega(k) = k/\sqrt{k^2/3 +1}$, which does not exhibit an inflection
point. Consequently, the numerical results obtained in
\cite{gavrilyuk_generalized_2018} and their interpretation as shock
solutions of the Whitham equations suggest that Whitham shocks persist
in nonlocal, strongly nonlinear regimes in addition to the weakly
nonlinear cases studied here.

This paper is organized as follows. In Section
\ref{sec:kdv5-traveling-wave}, we discuss TW solutions of the KdV5
equation.  Homoclinic TW solutions are used to derive the KdV5-Whitham
modulation equations in Section \ref{sec:kdv5-whith-modul}.  We also
study the properties of the Whitham modulation equations and identify
a class of 2-wave curves that we use to solve the KdV5 equation
\eqref{eq:kdv5} with Riemann data \eqref{eq:riemann_data}. In Section
\ref{sec:WS_abstract}, we study shock solutions of the modulation
equations. Here, we also prove that heteroclinic TWs consisting of
disparate, co-propagating periodic waves necessarily satisfy the
Rankine-Hugoniot jump conditions for the Whitham equations.  We use
this result to define admissible Whitham shocks as the zero dispersion
limit of KdV5 heteroclinic TW solutions. In Section \ref{sec:WS}, we
use the Whitham shock locus to compute a rich variety of heteroclinic
and homoclinic TW solutions.  We discuss extensions and applications
of the newly developed theory to the Riemann problem
\eqref{eq:riemann_data} for the KdV5 equation \eqref{eq:kdv5}, the
Kawahara equation \eqref{eq:kawahara}, and the Whitham equation
\eqref{eq:whitham_eq_intro} in Sec.~\ref{sec:applications}. In Section
\ref{sec:conclusion}, we conclude the manuscript and postulate new and
related problems to the present work.

\section{KdV5 periodic traveling wave solutions}
\label{sec:kdv5-traveling-wave}

Traveling waves are sought in the form
\begin{align*}
  u(x,t) = f(\xi),\quad \xi = x - c t
\end{align*}
where $c$ is the phase velocity. The profile function $f$
satisfies the ODE
\begin{align}\label{eq:TW_ode}
- c f' + f f ' + f^{(5)} = 0. 
\end{align}
Integration yields the fourth order profile ODE
\begin{equation}
  \label{eq:11}
  f'''' + \frac{1}{2} f^2 - c f - A/2 = 0,
\end{equation}
where $A \in \mathbb{R}$ is a constant of integration.  This ODE can
be integrated again to yield the energy-type integral
\begin{equation}
  \label{eq:16}
  f''' f' - \frac{1}{2} \left ( f'' \right )^2 + \frac{1}{6} f^3 -
  \frac{c}{2} f^2 - \frac{A}{2} f + B = 0, 
\end{equation}
where $B \in \mathbb{R}$ is again a constant of integration.  We can
rewrite the ODE \eqref{eq:11} as the first order system
\begin{equation}
  \label{eq:12}
  \mathbf{\Phi}' =   L  \mathbf{\Phi} + R(\mathbf{\Phi}), \quad L =   \begin{pmatrix}
      0 & 1 & 0 & 0 \\
      0 & 0 & 1 & 0 \\
      0 & 0 & 0 & 1 \\
      c & 0 & 0 & 0
    \end{pmatrix}, \quad R(\mathbf{\Phi}) =   \begin{pmatrix}
      0 \\ 0 \\ 0 \\ \frac{1}{2}(A- \Phi_1^2)
    \end{pmatrix} ,
\end{equation}
where $\Phi_n = \mathrm{d}^{n-1} f/\mathrm{d} \xi^{n-1}$,
$n=1,2,3,4$, are the components of the vector $\mathbf{\Phi}$.

\subsection{Linearization}
\label{sec:linearization}

The fixed points $\mathbf{U}(\xi) = \mathbf{U}_0$ of
the ODE \eqref{eq:12} are
\begin{equation}
  \label{eq:13}
  \mathbf{U}_0^{(\pm)} =
  \begin{pmatrix}
    U_0^{(\pm)} & 0 & 0 & 0
  \end{pmatrix}^T , \quad U_0^{(\pm)} = c \pm \sqrt{c^2 + A},
\end{equation}
so that $c^2 > -A$ for real equilibria. Linearization about these
fixed points results in the matrices
\begin{equation}
  \label{eq:14}
  L + \frac{\partial \mathbf{R}}{\partial
    \mathbf{U}}(\mathbf{U}_0^{(\pm)}) =
  \begin{pmatrix}
    0 & 1 & 0 & 0 \\
    0 & 0 & 1 & 0 \\
    0 & 0 & 0 & 1 \\
    c - U_0^{(\pm)} & 0 & 0 & 0
  \end{pmatrix},
\end{equation}
whose eigenvalues $\lambda_j^{(\pm)}$ come in quartets with
corresponding right eigenvectors $\mathbf{r}_j^{(\pm)}$
\begin{equation}
  \label{eq:15}
  \begin{split}
    \lambda_1^{(\pm)} &= -\mu^{(\pm)}, \quad \mathbf{r}_1^{(\pm)} = \left(1,-\mu^{(\pm)},\mu^{(\pm)},-\left(\mu^{(\pm)}\right)^3 \right)^{\rm T}, \\
    \lambda_2^{(\pm)} &= -i\mu^{(\pm)}, \quad
    \mathbf{r}_2^{(\pm)} = \left(1,-i\mu^{(\pm)},-\mu^{(\pm)},i\left(\mu^{(\pm)}\right)^3 \right)^{\rm T} ,\\
    \lambda_3^{(\pm)} &= i\mu^{(\pm)}, \quad \mathbf{r}_3^{(\pm)} =  \left(1,\mu^{(\pm)},-\mu^{(\pm)},-i\left(\mu^{(\pm)}\right)^3 \right)^{\rm T} ,\\
    \lambda_4^{(\pm)} &= \mu^{(\pm)}, \quad \mathbf{r}_4^{(\pm)} =
    \left(1,\mu^{(\pm)},\mu^{(\pm)},\left(\mu^{(\pm)}\right)^3 \right)^{\rm
      T},
  \end{split}
\end{equation}
where $\mu^{(\pm)} = (c - U_0^{(\pm)})^{1/4}$.  We observe that the fixed point $\mathbf{U}_0^{(-)}$ has
$U_0^{(-)} = c - \sqrt{c^2 +A} < c$ so that the complex conjugate
eigenvalues $\lambda_{2,3}^{(-)} = \mp i (c^2+A)^{1/8}$ lie on the
imaginary axis whereas $\lambda_{1,4}^{(-)} = \mp (c^2 +A)^{1/8}$ are
opposite and lie on the real axis. The eigenvalues $\lambda_j^{(+)}$
are $\pi/4$ rotations of $\lambda_j^{(-)}$ in the complex plane for
$j = 1,2,3,4$.

This linear analysis suggests the existence of small amplitude
periodic orbits near $\mathbf{U}_0^{(-)}$, with one-dimensional stable
and unstable manifolds in the directions $\mathbf{r}_1^{(-)}$ and
$\mathbf{r}_4^{(-)}$, respectively. These manifolds are what enable
the periodic-heteroclinic-to-periodic TWs numerically computed in this
paper.  The existence of periodic orbits for \eqref{eq:12} was
demonstrated in \cite{vanderbauwhede_homoclinic_1992} and related
results establishing the existence and stability of wavetrains in the
related Kawahara equation \eqref{eq:kawahara} are established in
\cite{haragus_spectral_2006}. This motivates our investigation of
heteroclinic connections between differing periodic orbits.

First, we consider $2\pi$-periodic traveling wave solutions to
\eqref{eq:kdv5} in the form
\begin{equation}
  \label{eq:18}
  f(\xi) = \varphi(\theta; \ub,a,k), \quad
  \varphi(\theta + 2\pi; \ub,a,k) = \varphi(\theta; \ub,a,k),
\end{equation}
where $\theta = k\xi$ is the phase variable and $\varphi$ possesses
three distinct parameters identified in Figure \ref{fig:qual_per} and
defined as
\begin{align*}
  \text{wavenumber:}& \quad k,\\
  \text{ wave mean:}& \quad \ub = \frac{1}{2\pi}\int_0^{2\pi}
                                 \varphi(\theta) d \theta,\\ 
  \text{wave amplitude:}& \quad a = \max\limits_{\theta \in [0,\pi]} \varphi(\theta) - \min\limits_{\theta \in [0,\pi]}  \varphi(\theta),
\end{align*}
and wave frequency $\omega = ck$, $c = c(\ub,a,k)$.

\begin{figure}
\begin{center}
\includegraphics[scale=0.4]{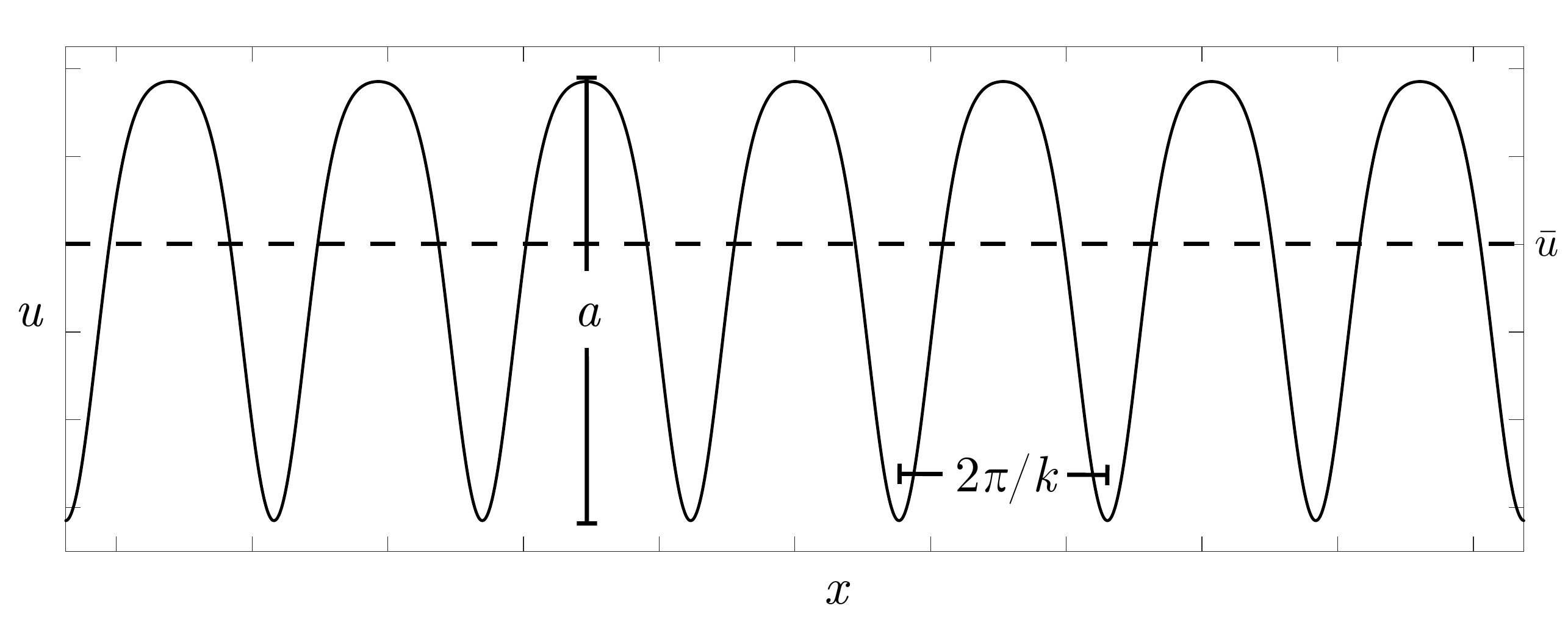}
\caption{Nonlinear periodic traveling wave solution to \eqref{eq:kdv5} with mean $\ub$, amplitude $a$ and wavenumber $k$.}
\label{fig:qual_per}
\end{center}
\end{figure}

\subsection{Scaling and Galilean symmetries}
\label{sec:scal-galil-symm}

KdV5 \eqref{eq:kdv5} admits the Galilean
\begin{equation}
  \label{eq:19}
  u(x,t) \to u(x-u_0t,t) + u_0,
\end{equation}
and scaling
\begin{equation}
  \label{eq:20}
  u \to b u, \quad x \to b^{-1/4} x, \quad t \to b^{-5/4} t
\end{equation}
symmetries. Therefore, the three-parameter family of periodic traveling waves can be generated from the one-parameter family
$\left \{ (\tilde{\varphi}(\theta;\tilde{k}),\tilde{c}(\tilde{k})) \right
\}_{\tilde{k}\ge 0}$ by
\begin{equation}
  \label{eq:21}
  \begin{split}
    \varphi(\theta;\ub,a,k) &= \ub +
    a\tilde{\varphi}(a^{1/4}\theta-a^{1/4}k \ub
    t;a^{-1/4}k)  ,\\
    c(\ub,a,k) &= a \tilde{c}(a^{-1/4}k) + \ub ,
  \end{split}
\end{equation}
for any $a > 0$, $k > 0$, $\ub \in \mathbb{R}$. Note that $\tilde{\varphi}(\theta;\tilde{k}) = \varphi(\theta;0,1,\tilde{k})$ and $\tilde{c}(\tilde{k})  = c (0,1,\tilde{k})$.

\subsection{Approximate and numerical computations of periodic waves}
\label{sec:stok-wave-appr}
We now obtain the periodic solutions to \eqref{eq:TW_ode} via a weakly
nonlinear Stokes frequency shift calculation
\cite{whitham_linear_1974} and through numerical computations. The
periodic solutions will be used to obtain the Whitham modulation
system.

To this end, we seek an approximate form for the periodic wave
$\varphi(\theta)$ and its phase speed $c$ as series expansions in the
small, finite amplitude parameter $a$
\begin{align*}
  \varphi = \ub + a \varphi_1 + a^2 \varphi_2 + \ldots, \quad c = c_0 + a^2 c_2 + \ldots.
\end{align*}
Inserting the asymptotic approximation and equating like powers of
$a$, we obtain the periodic wave and phase velocity up to
$\mathcal{O}(a^2)$
\begin{align}
  \label{eq:stokes}
  \varphi(\theta; \ub,a,k) &= \ub + \frac{a}{2} \cos \theta -
  \frac{a^2}{240k^4}\cos 2 \theta + \ldots, \quad 0 < a \ll 1, \quad k \gg
  a^{1/4}, \\
  \label{eq:stokes_disp}
  c(\ub,a,k) &= \ub +  k^4 - \frac{a^2}{480 k^4} + \ldots, \quad 0 <
  a \ll 1, \quad  k \gg a^{1/4}.
\end{align}
Here, the restriction on the wavenumber $k \gg a^{1/4}$ is required so
that the asymptotic series remains well ordered. We remark that the
asymptotic series can be rescaled to arbitrary $a$ via the symmetries
\eqref{eq:19}, \eqref{eq:20} so that the requirement $a \ll 1$ can be
formally relaxed but we must maintain the short wave requirement
$k \gg a^{1/4}$ in order to respect asymptotic ordering.

We now present computations of periodic solutions to the profile ODE
\eqref{eq:TW_ode} where we take $A = 0$ via a Galilean shift.  The
solutions are computed via a pesudospectral projection onto a Fourier
basis. A solution to the nonlinear system for the periodic wave's
Fourier coefficients is carried out with a Newton iteration. Details
of this computation are contained in Appendix A.

Examples of periodic traveling wave solutions and their corresponding
nonlinear dispersion curve are given in
Fig.~\ref{fig:per_waves}. Small $\tilde{k}$ corresponds to a well
separated, solitary wave train.  One metric for the validity of the
Stokes approximation is the accuracy of the phase velocity
\eqref{eq:stokes_disp}, which compares favorably with the numerically
computed phase velocity for $\tilde{k} \gtrsim 0.4$.

\begin{figure}
\begin{center}
\includegraphics[scale=0.4]{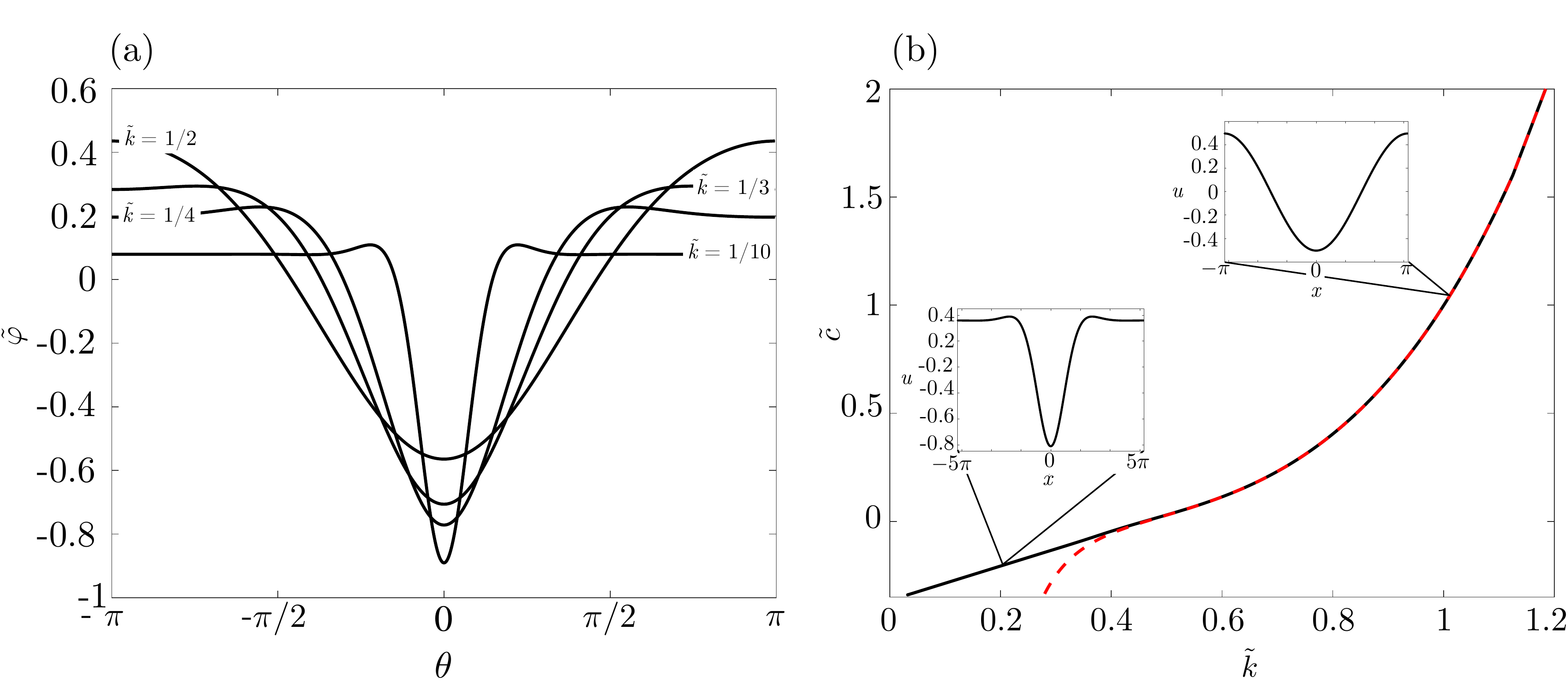}\label{fig:per_waves}
\caption{Numerically computed periodic traveling waves
  $\tilde{\varphi}(\theta,\tilde{k})$ with zero mean and unit
  amplitude. (a) Example periodic waves. (b) Comparison of numerically
  computed phase velocity $\tilde{c}$ for the family of periodic waves
  (black solid curve) compared with the Stokes approximation
  Eq.~\eqref{eq:stokes_disp} (red dashed curve). Insets: example
  periodic waves for two different phase velocities.}
\end{center}
\end{figure}

\section{KdV5-Whitham modulation equations}
\label{sec:kdv5-whith-modul}

The Whitham modulation equations describe the dispersionless limit of nonlinear wavetrains. Equivalently, these equations describe slow modulations of periodic traveling waves. The modulation equations are a system of first order, quasi-linear conservation laws for the parameters $(\ub,a,k)$ that we now derive. Equation \eqref{eq:kdv5} admits the two conservation laws
\begin{align}
  \label{eq:1}
  (u)_t + \left( \frac{1}{2}u^2 + u_{xxxx}\right)_x & = 0, \\
  \label{eq:2}
  \left(\frac{1}{2} u^2\right)_t + \left(\frac{1}{3}u^3 + uu_{xxxx} -
    u_x u_{xxx}  + \frac{1}{2}u_{xx}^2\right)_x & = 0.
\end{align} 
The Whitham equations can be obtained by a period-average of the
conservation laws, evaluated on the periodic TW manifold
\cite{whitham_non-linear_1965}. To see this, we introduce the rapid
phase variable $\theta = S(X,T)/\epsilon$ satisfying
\begin{align*}
  k = \theta_x = S_X, \quad \omega = -\theta_t = -S_T
\end{align*}
where we recall $X = \epsilon x$ and $T = \epsilon t$ are the long
space and slow time scales with $ 0 < \epsilon \ll 1$ as in
\eqref{eq:kdv5_dispersionless}. We then seek an asymptotic solution in
the form
\begin{align}\label{eq:asymp_series}
u(x,t) = \varphi(\theta; \ub,a,k) + \epsilon u_1(\theta,X,T) + \mathcal{O}(\epsilon^2),
\end{align} 
with a leading order periodic traveling wave whose parameters $\ub$,
$a$, $k$ depend on the slow variables $X$ and $T$ and each of the
$u_n$ are $2\pi$-periodic in $\theta$. Consequently, the
derivatives in \eqref{eq:1}, \eqref{eq:2} transform according to
\begin{align*}
\frac{\partial}{\partial x} = k\frac{\partial}{\partial \theta} + \epsilon \frac{\partial}{\partial X}, \quad \frac{\partial}{\partial t} = -\omega\frac{\partial}{\partial \theta} + \epsilon \frac{\partial}{\partial T}. 
\end{align*}
We now introduce averaging of the quantity $F[u(x,t);X,T]$ according
to
\begin{align}\label{eq:avg_def}
  \overline{F} = \frac{1}{2\pi}\int_{0}^{2\pi} F  \ d\theta .
\end{align}
Then, substituting the asymptotic series \eqref{eq:asymp_series} for
$u$ into $F$ results in
\begin{align*}
  \overline{\frac{\partial F}{\partial t}} = \epsilon \frac{\partial \overline{F}}{\partial T} + \mathcal{O}(\epsilon^2), \quad
  \overline{\frac{\partial F}{\partial x}} = \epsilon \frac{\partial \overline{F}}{\partial X} + \mathcal{O}(\epsilon^2),
\end{align*}
owing to the fact that the period average of an exact $\theta$
derivative is zero for periodic functions.  Applying the averaging
operator \eqref{eq:avg_def} to the conservation laws
\eqref{eq:1}, \eqref{eq:2} with the asymptotic series
\eqref{eq:asymp_series} for $u$, we obtain
\begin{align}
  \label{eq:5}
  (\overline{u})_T + \left( \frac{1}{2}\ol{\varphi^2} +
  k^4\ol{\varphi_{\theta\theta\theta\theta}}\right)_X & = 0, \\
  \label{eq:6}
  \left(\frac{1}{2} \ol{\varphi^2}\right)_T + \left(\frac{1}{3}\ol{\varphi^3} +
  k^4 \ol{\varphi \varphi_{\theta\theta\theta\theta}} - k^4\ol{\varphi_\theta
  \varphi_{\theta\theta\theta}} +  \frac{1}{2}k^4
  \ol{\varphi_{\theta\theta}^2}\right)_X & = 0,  
\end{align}
subject to $\mathcal{O}(\epsilon)$ corrections that vanish in the zero
dispersion limit.  The modulation system is closed by the phase
compatibility condition $S_{XT} = S_{TX}$, yielding the conservation
of waves
\begin{align}
  \label{eq:3}
  k_T + (c k)_X = 0 .
\end{align}
Thus, we conclude that the KdV5-Whitham modulation equations
\eqref{eq:5}, \eqref{eq:6}, and \eqref{eq:3} describe the
dispersionless limit of the KdV5 equation
\eqref{eq:kdv5_dispersionless}.  If initial data for
Eqs.~\eqref{eq:5}, \eqref{eq:6}, and \eqref{eq:3} are invariant to the
hydrodynamic scaling transformation $X \to b X$, $T \to b T$---e.g.,
the step initial data considered later---then an equivalent
alternative to the zero dispersion limit is the long time limit of
Eqs.~\eqref{eq:5}, \eqref{eq:6}, and \eqref{eq:3} with the unscaled
independent variables $x = X/\epsilon$, $t = T/\epsilon$.

The conservation laws \eqref{eq:5}, \eqref{eq:6} can be simplified via
integration by parts
\begin{align*}
  \overline{\varphi_{\theta\theta\theta\theta}} &= 0, \qquad
  \overline{\varphi\varphi_{\theta\theta\theta\theta}} =
                                                            \overline{\varphi_{\theta\theta}^2}, \qquad
  \overline{\varphi_{\theta} \varphi_{\theta\theta\theta}} =
                                                             -\overline{\varphi_{\theta\theta}^2} .
\end{align*}
These identities, along with a return to the unscaled variables $x,t$, imply that the Whitham equations in conservative form can be written compactly
\begin{align}
  \label{eq:whitham_1}
  \overline{u}_t + \left( \frac{1}{2}\ol{\varphi^2} \right)_x & = 0, \\
  \label{eq:whitham_2}
  \left(\frac{1}{2} \ol{\varphi^2}\right)_t + \left(\frac{1}{3}\ol{\varphi^3}  +
    \frac{5}{2}k^4 \ol{\varphi_{\theta\theta}^2}\right)_x & = 0,  \\
  \label{eq:whitham_3}
  k_t+ (ck)_x & = 0 .
\end{align}
This will be convenient of our purposes because it eliminates the
small dispersion parameter $\epsilon$ from the problem.

\subsection{Properties of the Whitham equations}
\label{sec:prop-whith-equat}

\subsubsection{Weakly nonlinear regime}
\label{sec:weakly-nonl-regime}

We approximate the averaging integrals in the Whitham equations
\eqref{eq:whitham_1}--\eqref{eq:whitham_3} with the Stokes wave
\eqref{eq:stokes} and its dispersion relation \eqref{eq:stokes_disp}
yielding
\begin{align}
  \label{eq:27}
  \ol{\varphi^2} &= \ub^2 + \frac{a^2}{8} + \cdots, \quad  \ol{\varphi^3} = \ub^3 + \frac{3}{8} \ub a^2 + \cdots ,\quad 
  \ol{(\varphi'')^2} = \frac{a^2}{8} + \cdots ,
\end{align}
accurate to $\mathcal{O}(a^2)$. Inserting these approximate integrals
into the averaged conservation laws
\eqref{eq:whitham_1}--\eqref{eq:whitham_3}, the weakly nonlinear
KdV5-Whitham equations in conservative form are approximately
\begin{align}\label{eq:weaklyNLconserv}
\begin{split}
\ub_t + \left(\frac{1}{2} \ub^2 + \frac{a^2}{16} \right)_x & = 0,\\
\left(\frac{1}{2} \ub^2 + \frac{a^2}{16} \right)_t + \left(\frac{1}{3}\ub^3 + \frac{1}{8}\ub a^2 + \frac{5}{16}k^4a^2\right)_x & = 0, \\
k_t + \left[\left(\ub + k^4 - \frac{a^2}{480k^4}\right)k\right]_x & = 0. 
\end{split}
\end{align}
The non-conservative, quasi-linear form is
\begin{align}
  \label{eq:25}
  \begin{bmatrix}
    \ub \\ a \\ k 
  \end{bmatrix}_t + \begin{bmatrix}
    \ub & \frac{a}{8} & 0 \\
   a  & \ub + 5k^4 & 10 a k^3 \\
    k     & 		-\frac{a}{240 k^3} & \ub + 5k^4 + \frac{a^2}{160k^4} \\
  \end{bmatrix}\begin{bmatrix}
    \ub \\ a \\ k 
  \end{bmatrix}_x = 0 .
\end{align}
A standard perturbation calculation gives the approximate eigenvalues
\begin{align}
  \label{eq:30}
  \lambda_1 & = \ub + \frac{a^2}{40k^4} + \mathcal{O}(a^3), \\
  \label{eq:31}
  \lambda_2 & = \ub + 5k^4 - \sqrt{\frac{5}{24}}a -\frac{3a^2}{320k^4}+ \mathcal{O}(a^3),
  \\
  \label{eq:32}
  \lambda_3 & = \ub + 5k^4 +  \sqrt{\frac{5}{24}}a -\frac{3a^2}{320k^4}+ \mathcal{O}(a^3),
\end{align}
and eigenvectors
\begin{align}
  \mathbf{r}_1  &= \begin{pmatrix}
    -5k^3 \\ 0 \\1 
  \end{pmatrix}  - a \begin{pmatrix}
    0 \\ \tfrac{1}{k} \\0 
  \end{pmatrix} +  a^2 \begin{pmatrix}  \tfrac{7}{480k^5} \\ 0\\ 0
  \end{pmatrix}+ \mathcal{O}(a^3), \quad\\
  \mathbf{r}_2 & = \begin{pmatrix}0 \\ -4\sqrt{30}	k^3 \\ 1
  \end{pmatrix} 
  + a \begin{pmatrix}-\frac{\sqrt{3}}{\sqrt{10}k}\\ \frac{33}{20k} \\ 0\end{pmatrix}
  + a^2
  \begin{pmatrix}
  -\frac{7}{800 k^5}\\ \sqrt{\frac{3}{10}}\frac{213 }{3200 k^5} \\0 
  \end{pmatrix},+ \mathcal{O}(a^3)\\
  \mathbf{r}_3 & = \begin{pmatrix} 0 \\  4\sqrt{30}	k^3 \\ 1
  \end{pmatrix} 
   + a \begin{pmatrix} \frac{\sqrt{3}}{\sqrt{10}k}\\ \frac{33}{20k} \\ 0\end{pmatrix}
   +  a^2 \begin{pmatrix}
  -\frac{7}{800 k^5}\\ -\sqrt{\frac{3}{10}}\frac{213 }{3200 k^5}\\0 
  \end{pmatrix}+ \mathcal{O}(a^3)
\end{align}
of the flux Jacobian matrix in \eqref{eq:25}. The quasi-linear system
\eqref{eq:25} is genuinely nonlinear if
$\mu_j = \nabla \lambda_j \cdot \mathbf{v}_j \neq 0$ for all $j$. By
inspection and direct calculation, we find that the weakly nonlinear
Whitham equations in the asymptotic regime $0 < a \ll 1$,
$k \gg a^{1/4}$ are strictly hyperbolic and genuinely nonlinear.


\subsubsection{Strongly nonlinear regime}
\label{sec:strongly-nonl-regime}
In the strongly nonlinear regime, we rewrite the modulation equations \eqref{eq:whitham_1}--\eqref{eq:whitham_3} in non-conservative form in terms of the modulation variables $\mathbf{q} = [\ub, a, k]^{\rm T}$ 
\begin{align}\label{eq:whitham_noncons}
\mathbf{q}_t + \mathcal{A}\mathbf{q}_x = 0 
\end{align}
where 
\begin{align*}\displaystyle
\mathcal{A} = 
\begin{bmatrix}
1 & 0 & 0 \\
\frac{1}{2}\frac{\partial \ol{\varphi^2} }{\partial \ub} & \frac{1}{2}\frac{\partial \ol{\varphi^2} }{\partial a} & \frac{1}{2}\frac{\partial \ol{\varphi^2} }{\partial k}\\
0 & 0 & 1
\end{bmatrix}^{-1}\begin{bmatrix}
\frac{1}{2}\frac{\partial \ol{\varphi^2} }{\partial \ub} & \frac{1}{2}\frac{\partial \ol{\varphi^2} }{\partial a} & \frac{1}{2}\frac{\partial \ol{\varphi^2} }{\partial k}\\
\frac{1}{3}\frac{\partial \ol{\varphi^3} }{\partial \ub} + \frac{5}{2}k^4\frac{\partial \ol{\varphi_{\theta\theta}^2} }{\partial \ub} & \frac{1}{3}\frac{\partial \ol{\varphi^3} }{\partial a} + \frac{5}{2}k^4\frac{\partial \ol{\varphi_{\theta\theta}^2} }{\partial a} & \frac{1}{3}\frac{\partial \ol{\varphi^3} }{\partial k}+ \frac{5}{2}\frac{\partial (k^4 \ol{\varphi_{\theta\theta}^2} )}{\partial \ub}\\
k \frac{\partial c}{\partial \ub} & k \frac{\partial c}{\partial a} & \frac{\partial( c k)}{\partial \ub}
\end{bmatrix},
\end{align*}
which is valid provided the matrix inverse exists. For all of the
computations performed here, $\mathcal{A}$ is well-defined.  The
symmetries \eqref{eq:19}, \eqref{eq:20} of solutions to the KdV5
equation can be used to directly compute the integral dependence on
the mean $\ub$ and amplitude $a$.  We define the averaging integrals
\begin{align*}
  I_n(\ub,a,k) = \frac{1}{2\pi}\int_0^{2\pi} \varphi^n(\theta;\ub,a,k)  \ d \theta, \quad J_2(\ub,a,k) =  \frac{1}{2\pi}\int_0^{2\pi} \varphi_{\theta\theta}^2(\theta;\ub,a,k) \  d \theta
\end{align*}
These integrals can be written explicitly in terms of period averages
over the one-parameter family of periodic solution
$\tilde{\varphi}(\theta;\tilde{k})$ using \eqref{eq:21}
\begin{equation}\label{eq:scaled_ints}
  \begin{split}
    I_2(\ub,a,k) &= \ub^2 + a^2 \tilde{I}_2(a^{-1/4}k), \\
    I_3(\ub,a,k) &= \ub^3 + 3\ub a^2 \tilde{I}_2(a^{-1/4}k) + a^3
    \tilde{I}_3(a^{-1/4}k), \\
    J_2(\ub,a,k) &= a^2\tilde{J}_2(a^{-1/4}k),
  \end{split}
\end{equation}
where we have introduced the $\tilde{\cdot}$ variables to denote
averaging integrals of unit amplitude, zero mean periodic
solutions.

We now investigate the hyperbolicity/ellipticity and genuine
nonlinearity of the modulation equations. These criteria depend solely
upon the eigenvalues $\lambda_j$ and eigenvectors $\mathbf{r}_j$ of
$\mathcal{A}$ that satisfy
\begin{align*}
\left(\mathcal{A}-\lambda_j I\right) \rb_j = 0,
\end{align*}
where $I$ is the $3 \times 3$ identity matrix. In general, we expect
three eigenpairs $\left\{ (\lambda_j,\rb_j)\right\}_{j = 1}^{3}$ that
depend on the values $(\ub,a,k)$. By the symmetries \eqref{eq:19},
\eqref{eq:20}, it is enough to determine their dependence on
$\tilde{k}$ only.

Weak hyperbolicity of the modulation equations is a necessary
condition for the modulational stability of periodic waves
\cite{whitham_linear_1974,benzoni-gavage_slow_2014}. Numerical
computation of the characteristic velocities depicted in
Figure~\ref{fig:characteristics} illustrate hyperbolicity of the
modulation equations for a range of $\tilde{k}$. When
$0.41 \lesssim \tilde{k} \lesssim 0.65$ and
$0.25 \lesssim \tilde{k} \lesssim 0.3$, two complex conjugate
characteristic velocities with nonzero imaginary part emerge, so that
corresponding periodic waves are modulationally unstable. It is
further shown in Fig.~\ref{fig:characteristics} that the weakly
nonlinear calculations for the characteristic velocities
Eq.~\eqref{eq:30}--\eqref{eq:32} are in excellent agreement with fully
nonlinear calculations for $\tilde{k} \gtrsim 0.7$.

In a similar manner, we utilize the eigenvalues and eigenvectors to determine regions of genuine nonlinearity for the modulation equations \eqref{eq:whitham_1}--\eqref{eq:whitham_3} by computing the quantity 
\begin{align}\label{eq:genNL}
  \mu_j =\nabla \lambda_j \cdot \mathbf{r}_j, \quad j = 1,2,3.
\end{align}
For waves of unit amplitude and zero mean, we find points of
non-genuine nonlinearity where $\mu_{1} \approx \mu_{2} \approx 0$,
which are identified by the red points in the lower panel of
Fig.~\ref{fig:characteristics} (a).  Our careful numerical
computations (see Appendix) suggest that both
the first and second characteristic fields are not genuinely nonlinear
at these select points $\tilde{k} \in \{0.167,0.185,0.222,0.370\}$. 

In strictly hyperbolic regions, we compute wave curves that
parameterize self-similar, simple wave solutions to the modulation
equations \eqref{eq:whitham_noncons}. These wave curves will be
applied to the solution of Riemann problems in
Sec.~\ref{sec:applications}. The $j$-th wave curve is the integral
curve in the direction of $\mathbf{r}_j$
\cite{dafermos_hyperbolic_2009}. We introduce the self similar
parameterization $s = s(x,t)$ so that $\mathbf{q} = \mathbf{q}(s)$ and,
along $s = \lambda_j$,
\begin{align}\label{eq:simp_wave_ode}
  \frac{\mathrm{d}\mathbf{q}}{\mathrm{d}s} = \frac{\mathbf{r}_j}{\mu_j}.
\end{align}
Illustrative 2-wave curves are displayed in
Figs.~\ref{fig:characteristics}(b) and (c) and correspond to the
shaded regions in Fig.~\ref{fig:characteristics}(a) (lower panel) on
the left and right respectively. In these figures, we project the
three dimensional curve $(\ub(s),a(s),k(s))$ onto both the $\ub$-$a$
plane and the $\ub$-$k$ plane. We normalize the wave curves to emanate
from the zero mean, unit amplitude state $\ub = 0$, $a = 1$. For this
visualization, we should note that $\ub(s)$ is a monotonically
decreasing function in $s$.

\begin{figure}[h!]
\begin{center}
\includegraphics[scale = 0.5]{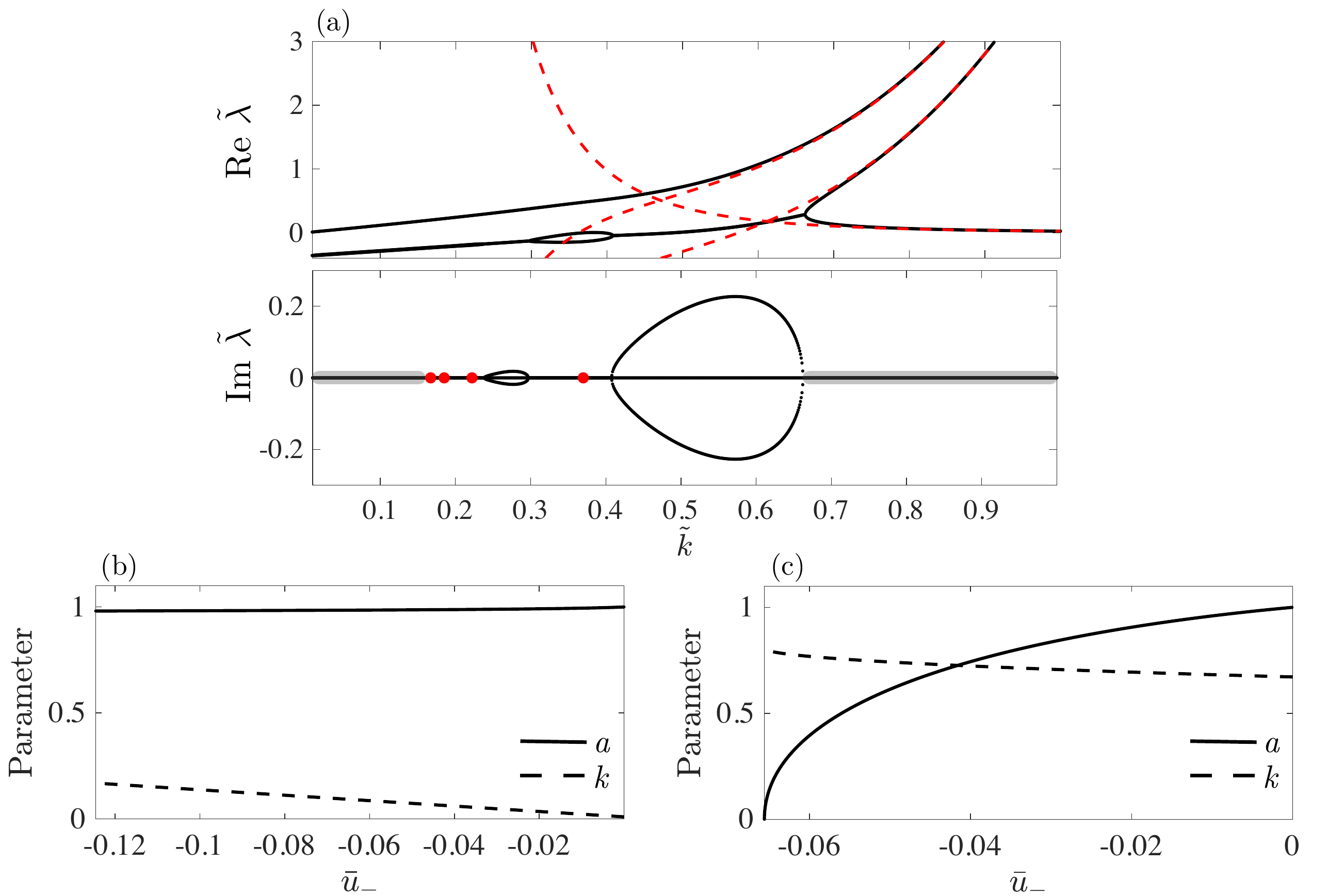}
\caption{(a) Eigenvalues of $\mathcal{A}$ for $\ub = 0$, and $a = 1$
  as a function of wavenumber $\tilde{k}$ shown in black curves. The
  eigenvalues from weakly nonlinear theory
  \eqref{eq:30}--\eqref{eq:32} are identified by dashed red curves.
  Gray, banded regions denote the domain of the 2-wave curves in (b)
  and (c). (b) Example 2-wave curve for $0 < \tilde{k} \lesssim 0.15$.
  (c) Example 2-wave curve for $0.65 \lesssim \tilde{k} < 1$.}
\label{fig:characteristics}
\end{center}
\end{figure}

\section{Whitham shocks: Abstract Setup}
\label{sec:WS_abstract}
Now that we have obtained the Whitham modulation equations and
described their structure, we consider shock solutions for the
conservation laws \eqref{eq:whitham_1}--\eqref{eq:whitham_3} and their
connections to heteroclinic traveling wave solutions of the KdV5
equation.

Shock solutions of the modulation equations
\eqref{eq:whitham_1}--\eqref{eq:whitham_3} take the form of a moving
discontinuity
\begin{align}
  \label{eq:BCs}
  \ol{u}(x,t) & =
  \begin{cases}
    \ol{u}_- & x < Vt \\
    \ol{u}_+ & x > Vt
  \end{cases}
  &a(x,t) & =
  \begin{cases}
    a_- & x < Vt \\
    a_+ & x > Vt
  \end{cases}
  &k(x,t)  =
  \begin{cases}  
    k_- & x < Vt \\
    k_+ & x > Vt
  \end{cases}.
\end{align}
In order for the shock \eqref{eq:BCs} to be a weak solution of
\eqref{eq:whitham_1}--\eqref{eq:whitham_3}, the left
$(\ub_-,a_-,k_-)$, right $(\ub_+,a_+,k_+)$ states and the shock
velocity $V$ satisfy the Rankine-Hugoniot jump conditions
$$ - V \llbracket \mathbf{P} \rrbracket  + \llbracket  \mathbf{Q} \rrbracket = 0, $$ 
where $\mathbf{P}, \mathbf{Q} \in \mathbb{R}^3$ are the averaged
densities and fluxes in
Eqs.~\eqref{eq:whitham_1}--\eqref{eq:whitham_3} and
$\llbracket \cdot \rrbracket $ represents the difference in the left,
right values. Consequently, the jump conditions take the form
\begin{align}
  -V(\ol{u}_- - \ol{u}_+ ) + \frac{1}{2}\left(\ol{\varphi_-^2} -
  \ol{\varphi_+^2} \right)  & = 0,\label{eq:RH1}\\ 
  -\frac{V}{2}\left(\ol{\varphi_-^2} - \ol{\varphi_+^2} \right) +
  \frac{1}{3}\left(\ol{\varphi_-^3} - \ol{\varphi_+^3} \right) +
  \frac{5}{2}\left(k_-^4  \ol{\varphi_{-,\theta\theta}^2} -
  k_+^4\ol{\varphi_{+,\theta\theta}^2} \right) & = 0,\label{eq:RH2}\\ 
  -V(k_- - k_+) + \left(c_- k_-  - c_+ k_+\right) & = 0 . \label{eq:RH3}
\end{align}
where $\varphi_\pm = \varphi(\theta; \ub_\pm, a_\pm,k_\pm)$ are the
right ($+$)/left ($-$) periodic traveling waves that compose the
Whitham shock.  We summarize this in the following.
\begin{mydef}[Whitham shock]
  \label{sec:whith-shocks:-abstr}
  A discontinuity \eqref{eq:BCs} moving with velocity $V$ is called a
  \textit{Whitham shock} when it is a weak solution of the Whitham
  modulation equations in conservative form
  \eqref{eq:whitham_1}--\eqref{eq:whitham_3}.  That is, the left
  $(\ub_-,a_-,k_-)$ and right $(\ub_+,a_+,k_+)$ states satisfy the
  Rankine-Hugoniot jump conditions \eqref{eq:RH1}--\eqref{eq:RH3}.
\end{mydef}

In order to determine admissibility of Whitham shocks, we consider
smooth heteroclinic TW solutions to the KdV5 equation \eqref{eq:kdv5}
that asymptote to distinct periodic waves as $x \to \pm \infty$
\begin{align}\label{eq:IC}
  u(x,t) \to \begin{cases} \varphi_-(k_-x  - \omega_- t -\theta_-) \equiv
    \varphi(k_-x  - \omega_- t - \theta_-;\ub_-,a_-,k_-)
    & x \to -\infty \\[3mm] \varphi_+(k_+x-\omega_+ t - \theta_+) \equiv \varphi(k_+x - \omega_+ t - \theta_+;\ub_+,a_+,k_+) & x \to \infty  \end{cases} ,
\end{align}
with each far-field periodic wave ($\varphi_\pm$) characterized by the
parameters $(\ub_\pm,a_\pm,k_\pm)$ and phases $\theta_\pm$.

Our main result states that KdV5 heteroclinic traveling wave solutions
exhibiting the far field behavior \eqref{eq:IC} necessarily lie on the
Whitham shock locus of states satisfying
\eqref{eq:RH1}--\eqref{eq:RH3}.
\begin{theorem}
  \label{thm:1}%
  Suppose $f(\xi)$ with speed $c$ ($\xi = x -ct$) satisfies
  \eqref{eq:TW_ode}, hence is a traveling wave solution of the
  differential equation \eqref{eq:kdv5} such that
  $$\inf\limits_{\theta_\pm \in [0, 2\pi)} |f(\xi) -
  \varphi_\pm(k_\pm\xi - \theta_\pm) | \to 0$$ uniformly as
  $\xi \to \pm \infty$, then the Rankine-Hugoniot relations
  \eqref{eq:RH1}--\eqref{eq:RH3} for the KdV5-Whitham equations
  \eqref{eq:whitham_1}--\eqref{eq:whitham_3} are satisfied by the
  far-field periodic waves $\varphi_\pm$ with parameters $\ub_\pm$,
  $a_\pm$, $k_\pm$.  The traveling wave speed is the shock speed
  $c = V$ and coincides with the phase velocities $c = c_\pm$.
\end{theorem}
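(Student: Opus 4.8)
The plan is to exploit the structure of the traveling wave ODE \eqref{eq:TW_ode} together with the fact that each conserved density $P_m$ and flux $Q_m$ in \eqref{eq:whitham_1}--\eqref{eq:whitham_3} is a polynomial in $f$ and its $\xi$-derivatives, so that on a traveling wave $f(\xi)$ with $u(x,t)=f(x-ct)$ the conservation law $\partial_t P_m + \partial_x Q_m = 0$ collapses to the single ODE identity $\tfrac{d}{d\xi}\bigl(-c\,P_m(f,f',\dots) + Q_m(f,f',\dots)\bigr) = 0$. Hence along the wave the quantity $G_m(\xi) := -c\,P_m + Q_m$ is \emph{constant} in $\xi$. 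The strategy is then: (i) verify this collapse for each of the three KdV5 conservation laws; (ii) take the limits $\xi\to\pm\infty$ of $G_m(\xi)$, using the uniform convergence hypothesis to evaluate these limits on the far-field periodic waves $\varphi_\pm$; (iii) equate the two limits and recognize the resulting identity as exactly the Rankine-Hugoniot relation \eqref{eq:RH1}--\eqref{eq:RH3} with $V=c$.

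First I would note that the phase velocity of a periodic traveling wave is determined by its parameters via $c=c(\ub,a,k)$, and since $f$ asymptotes to $\varphi_\pm$ while itself traveling with speed $c$, the far-field waves must travel with the same speed; uniform convergence of the profiles forces $c=c_- =c_+$ (otherwise the two oscillatory tails would drift relative to $f$, contradicting convergence in the moving frame $\xi = x-ct$). This pins down the last sentence of the theorem and also fixes $V=c$. Next, for step (ii), the subtle point is that $G_m(\xi)$ is literally constant, so its value equals $\lim_{\xi\to-\infty}G_m(\xi)$ and also $\lim_{\xi\to+\infty}G_m(\xi)$; but $G_m$ is not itself periodic in the limit — rather $G_m(\xi) - [-c\,P_m(\varphi_\pm) + Q_m(\varphi_\pm)](k_\pm\xi-\theta_\pm) \to 0$. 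Since the left side is constant and the subtracted term is periodic, the periodic term must in fact be constant in its argument, and that constant is the period-average $-c\,\ol{P_m(\varphi_\pm)} + \ol{Q_m(\varphi_\pm)}$. Equating the $\xi\to-\infty$ and $\xi\to+\infty$ constants gives $-c\,\ol{P_m(\varphi_-)} + \ol{Q_m(\varphi_-)} = -c\,\ol{P_m(\varphi_+)} + \ol{Q_m(\varphi_+)}$, i.e. $-c\,\llbracket \ol{P_m}\rrbracket + \llbracket \ol{Q_m}\rrbracket = 0$, which with $m=1,2$ yields \eqref{eq:RH1}--\eqref{eq:RH2} and the wave-conservation law $k_t+(ck)_x=0$ yields \eqref{eq:RH3} upon noting $\omega_\pm = c k_\pm$ on a traveling wave.

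The main obstacle is justifying that a period average can be extracted from a merely \emph{asymptotically} periodic profile — i.e. step (ii). One must argue carefully that if $h(\xi)$ is constant and $h(\xi) - g(k\xi-\theta)\to 0$ with $g$ continuous and $2\pi$-periodic, then $g$ is constant and equals its own average; this is elementary (evaluate along a sequence $\xi_n$ with $k\xi_n-\theta$ dense mod $2\pi$), but it does require that the densities and fluxes, being polynomials in $f$ and its derivatives, converge uniformly along with $f$ — so one needs the convergence hypothesis to upgrade from $f$ to $f',f'',f''',f''''$. This follows because $f$ solves the fourth-order ODE \eqref{eq:11}, which expresses $f''''$ algebraically in terms of $f$; bootstrapping through the linearized flow \eqref{eq:14}--\eqref{eq:15} near each fixed point (or more directly, differentiating the relation $f(\xi)\approx \varphi_\pm$ is not rigorous, but the ODE itself propagates the convergence to all derivatives via standard ODE continuity-in-data arguments on the first-order system \eqref{eq:12}). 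A secondary, purely bookkeeping obstacle is confirming that the specific combination appearing in \eqref{eq:RH2} — involving $\tfrac{5}{2}k^4\ol{\varphi_{\theta\theta}^2}$ rather than the raw flux in \eqref{eq:6} — is consistent; this is handled by the integration-by-parts identities $\ol{\varphi\varphi_{\theta\theta\theta\theta}}=\ol{\varphi_{\theta\theta}^2}$, $\ol{\varphi_\theta\varphi_{\theta\theta\theta}}=-\ol{\varphi_{\theta\theta}^2}$ already recorded before \eqref{eq:whitham_1}, which hold for the periodic far-field waves and let one pass freely between the two conservative forms.
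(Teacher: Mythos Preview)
Your proposal is correct and follows essentially the same approach as the paper: both use the two first integrals of the traveling-wave ODE (the paper's \eqref{eq:11} and \eqref{eq:16} are exactly your constants $G_1=A/2$ and $G_2=B$) and then evaluate them on the far-field periodic limits to extract the jump conditions with $V=c$. The only cosmetic difference is that the paper phrases the limiting step via an averaging operator $\bar{\mathcal{L}}^\pm$ applied to the integrals, whereas you argue that a constant function asymptotic to a periodic one forces the latter to be constant (hence equal to its average); you are also somewhat more explicit than the paper about the need to upgrade the convergence hypothesis from $f$ to its derivatives.
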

\begin{proof}
  By assumption, there exists some TW profile $f(\xi)$ that
  solves equation \eqref{eq:TW_ode} with the far-field behavior
  described by the boundary conditions \eqref{eq:IC}. The factors
  $\theta_\pm$ are appropriate phase shifts so that the traveling wave
  matches the far-field periodic orbits $\varphi_\pm$. We define the
  averaging operators $\bar{\mathcal{L}}^{\pm}$ acting on a periodic,
  integrable function $F(\xi)$ as
  \begin{align*}
    \bar{\mathcal{L}}^{\pm} [F] & = \lim\limits_{\bar{x} \to \pm \infty}
                            \frac{k_\pm}{2\pi} \int_{\bar{x}}^{\bar{x} +
                            2\pi/k_\pm} F(\xi) d\xi  .
  \end{align*}
  Since $f \to \varphi_\pm$ uniformly, we can compute for $n = 1,2,3$
 
  \begin{align*}
    \bar{ \mathcal{L}}^{\pm} [f^n] & =  \lim\limits_{\bar{x} \to \pm \infty}
                                           \frac{k_\pm}{2\pi}\int_{\bar{x}}^{\bar{x}
                                           + 2\pi/k_\pm} f^n(\xi) d \xi, \\ 
                                         &  = \frac{1}{2\pi}\int_{\bar{x}}^{\bar{x}
                                           + 2\pi} \varphi_\pm^n(\theta) d \theta, \\
                                         &= \ol{\varphi_\pm^n}
  \end{align*}
  and
  \begin{align*}
    \bar{\mathcal{L}}^{\pm} [(f'')^2] & =  \lim\limits_{\bar{x} \to \pm \infty}
                                              \frac{k_\pm}{2\pi}\int_{\bar{x}}^{\bar{x}
                                              + 2\pi/k_\pm} (f''(\xi))^2 d \xi \\ 
                                            &  = \frac{k_\pm^4}{2\pi}\int_{\bar{x}}^{\bar{x}
                                              + 2\pi}  \left(\varphi_\pm''(\theta)
                                              \right)^2 d \theta \\
                                            &= k_\pm^4 \ol{(\varphi_\pm'')^2} .
  \end{align*}
  Since $f$ is a traveling wave solution to \eqref{eq:TW_ode}
  with two periodic wave limits, the phase speed of each must be the
  same, i.e.,
  \begin{equation}
    \label{eq:23}
    c = c_\pm .
  \end{equation}
  This condition immediately implies the jump condition \eqref{eq:RH3}
  with $V = c$ from the conservation of waves. All TW solutions with
  speed $c$ admit the first and second integrals \eqref{eq:11} and
  \eqref{eq:16}.
  We now apply the operator $\bar{\mathcal{L}}^\pm$ to the first
  integral
  \begin{align}
    \label{eq:24}
    -c \ol{u}_\pm + \frac{1}{2}\ol{\varphi^2_\pm} &= \frac{A}{2} ,
  \end{align}
  where we used $\bar{\mathcal{L}}^\pm[f''''] = 0$ by virtue of
  periodicity.  Equating the two relations in \eqref{eq:24} to
  eliminate $A$ gives
  \begin{align}\label{eq:TW_jump1}
    -c(\ol{u}_- - \ol{u}_+ ) + \frac{1}{2}\left(\ol{\varphi_-^2} -
    \ol{\varphi_+^2} \right)  & = 0,
  \end{align}
  which is the first jump condition \eqref{eq:RH1} when we identify
  $V = c$. Applying the operator $\bar{\mathcal{L}}^\pm$ to the second
  integral \eqref{eq:16} results in
  \begin{align}
    \label{eq:averaged_2nd_int}
    k_\pm^4 \ol{\varphi_\pm'''\varphi_\pm'} -  \frac{k_\pm^4}{2}
    \ol{(\varphi_\pm'')^2} + \frac{1}{6}\ol{\varphi_\pm^3}
    -\frac{c}{2}\ol{\varphi_\pm^2} - \frac{A}{2} \ol{u}_\pm + B = 0,
  \end{align}
  where we used
  $\bar{\mathcal{L}}^\pm[f'''f'] = k_\pm^4
  \ol{\varphi_\pm'''\varphi_\pm'}$.  Integrating by parts and applying
  \eqref{eq:TW_jump1} simplifies Eq.~\eqref{eq:averaged_2nd_int} to
  \begin{align}
  \label{eq:17}
    \frac{c}{2}\ol{\varphi_\pm^2} - \frac{1}{3}\ol{\varphi_\pm^3} -\frac{5}{2}
    k_\pm^4\ol{(\varphi_\pm'')^2} &=  -B . 
  \end{align}
  The third and final Rankine-Hugoniot condition \eqref{eq:RH2} with
  $V = c$ is found by subtracting the $+$ and $-$ instances of
  Eq.~\eqref{eq:17} to eliminate $B$
  \begin{align}
    \label{eq:TW_jump2}
    -\frac{c}{2}\left(\ol{\varphi_-^2} - \ol{\varphi_+^2} \right) +
    \frac{1}{3}\left(\ol{\varphi_-^3} - \ol{\varphi_+^3} \right) +
    \frac{5}{2}\left(k_-^4 \ol{(\varphi_-'')^2} - k_+^4
    \ol{(\varphi_{+}'')^2} \right) & = 0,
  \end{align}
  thereby completing the proof. 
\end{proof}
Theorem \ref{sec:whith-shocks:-abstr} motivates the following.
\begin{mydef}[admissibility]
  \label{sec:whith-shocks:-abstr-1}
  A KdV5 Whitham shock (recall
  Definition~\ref{sec:whith-shocks:-abstr}) is \textit{admissible} if
  there exists a heteroclinic traveling wave solution with speed
  $c = V$ to the KdV5 equation \eqref{eq:kdv5} satisfying the profile
  equation \eqref{eq:TW_ode} and the boundary conditions \eqref{eq:IC}
  corresponding to the left ($-$)/right ($+$) states of the Whitham
  shock \eqref{eq:BCs} and $V = c_+ = c_-$.
\end{mydef}

In the following section, we provide extensive numerical computations
of heteroclinic TW solutions that support the existence of admissible
Whitham shocks for the KdV5 equation.

\section{Whitham shocks}
\label{sec:WS}

In this section, we study admissible KdV5 Whitham shocks via
increasing levels of complexity. First, we consider the case of a
shock solution \eqref{eq:BCs} to the modulation equations where one
far-field state degenerates to zero wavenumber, i.e., a solitary
wave. These results are then generalized to Whitham shocks in which
two periodic waves satisfying the jump conditions propagate with
identical phase velocities. The section culminates with computations
of two co-propagating Whitham shocks where the corresponding TW
solutions are homoclinic, localized oscillatory patterns on an
oscillatory or uniform background.

\subsection{Solitary wave to periodic}
\label{sec:soli-to-per}
We consider the case where the left periodic wave degenerates to a
solitary wave ($k_- \to 0$) and the right periodic wave is of unit
amplitude and zero mean ($a_+ = 1$, $\ub_+ = 0$). The associated
Whitham shock \eqref{eq:BCs} is
\begin{equation}
  \label{eq:soli-per_Riemann}
  \big(\ub,a,k\big)(x,t) = \begin{cases}  (\ub_-, a_-, 0) & x < Vt \\  (0, 1 , k_+) & x \geq Vt \end{cases}.
\end{equation}
In the solitary wave limit ($k \to 0$) of the modulation equations
\eqref{eq:whitham_1}--\eqref{eq:whitham_3}, the averaged quantities
are
\begin{align*}
  \ol{\varphi^2} & = \ub^2, \quad \ol{\varphi^3} = \ub^3, \quad
                   \ol{(\varphi'')^2} = 0 .
\end{align*}
Therefore, the jump conditions \eqref{eq:RH1}--\eqref{eq:RH3} with a solitary wave on the left are 
\begin{align}\label{eq:soli-per_jump1}
-V \left(\ub_-\right) + \left(\frac{1}{2}\ub_-^2 - \frac{1}{2} \ol{\varphi_+^2} \right) & = 0,\\
-\frac{V}{2}\left(\ub_-^2 - \ol{\varphi_+^2} \right) + \left( \frac{1}{3}\ub_-^3 -  \frac{1}{3} \ol{\varphi_+^3} - \frac{5}{2}k_+^4 \ol{\varphi_{+,\theta\theta}^2}\right) & = 0, \\
V -c_+ = 0. \label{eq:soli-per_jump3}
\end{align}
An illuminating calculation using the Stokes wave approximation
\eqref{eq:stokes} for $\varphi_+$ leads to explicit formulae. In this
case, the jump conditions
\eqref{eq:soli-per_jump1}--\eqref{eq:soli-per_jump3} are solved by
\begin{align}\label{eq:weaklyNL_waveparams}
V = c_+ &= \frac{\ub_-}{2} - \frac{1}{16\ub_-}, \quad
k_+^4  = \frac{4\ub_-^3 + 3\ub_-}{15} - \frac{1}{80\ub_-}
\end{align}
where $\ub_-^2$ is a root of the polynomial 
\begin{align}\label{eq:ubar_quartic}
  1024 \left(\ub_-^2\right)^4 - 384 \left(\ub_-^2\right)^3 - 720 \left(\ub_-^2\right)^2 + 168\left(\ub_-^2\right) - 9 = 0 ,
\end{align}
which has three positive solutions
$\ub_-^2 \in \{0.08777, 0.1337, 0.9455\}$. The positive square roots
are inserted into \eqref{eq:weaklyNL_waveparams} to obtain the
wavenumber $k_+$, shock velocity $V$, and the right characteristic
velocities from Eqs.~\eqref{eq:30}--\eqref{eq:32} for three distinct
shock loci, all summarized in Table \ref{tab:stokes}. The left
solitary wave amplitude $a_-$ can be recovered from the solitary wave
speed-amplitude relation by equating it to the shock velocity
$c(\ub_-,a_-,0) = V$.  The left characteristic speeds for the left
solitary wave state are $\lambda_1^{(-)} = \ub_-$,
$\lambda_2^{(-)} = \lambda_3^{(-)} = V$.  The reason that two of the
characteristic velocities for the left state are the same is that two
of the three modulation equations coincide in the solitary wave limit
$k \to 0$, which is a well-known property of the Whitham equations
\cite{el_dispersive_2016-1}.  The cases where $\ub_-^2 < 0$ or
$\ub_- < 0$ in Eq.~\eqref{eq:ubar_quartic} can be dismissed because
these choices result in a complex value for $\ub_-$ or $k_+$ in
Eq.~\eqref{eq:weaklyNL_waveparams}.

\begin{table}[]
  \begin{center}
\begin{tabular}{l|r|l|c|c|c}
  \multicolumn{1}{c|}{$\ub_-$} & \multicolumn{1}{c|}{$V$} & \multicolumn{1}{c|}{$k_+$} & $\lambda_1^{(+)}$                   & $\lambda_2^{(+)}$                  & \multicolumn{1}{c}{$\lambda_3^{(+)}$} \\ \hline
  0.2963                    & $-0.0628 $         & 0.3936                  &   -0.7271 &                       0.1858 &   1.0416                \\
  0.3657                      & 0.0119              & 0.4775                  & -0.3768                                 &        0.4809               & 0.5360                               \\
  0.9724                      & 0.4219               & 0.8082                  & \multicolumn{1}{r|}{0.0586} & \multicolumn{1}{l|}{1.6775} & \multicolumn{1}{l}{2.5904}     
\end{tabular}
\end{center}
\caption{Three distinct Whitham shock loci and right ($+$)
  characteristic velocities for \eqref{eq:soli-per_Riemann} when the
  right periodic wave $\varphi_+$ is approximated by a weakly
  nonlinear Stokes wave \eqref{eq:stokes}, \eqref{eq:stokes_disp}.}
\label{tab:stokes}
\end{table}

\begin{table}[]
\begin{center}
\begin{tabular}{rl|r|l|c|c|c}
  & \multicolumn{1}{c|}{$\ub_-$} & \multicolumn{1}{c|}{$V$} & \multicolumn{1}{c|}{$k_+$} & $\lambda_1^{(+)}$                   & $\lambda_2^{(+)}$                  & \multicolumn{1}{c}{$\lambda_3^{(+)}$} \\ \hline
  $\blacktriangle$ & 0.2522                       & $-0.1133 $         & 0.3173              & \multicolumn{1}{r|}{$-0.1462$}                     &  \multicolumn{1}{r|}{$-0.0780$}                          &  \multicolumn{1}{r}{0.4056}                             
  \\
  $\newmoon$ & 0.3479                    & $-0.0071$             & 0.4496             &  \multicolumn{1}{r|}{0.6071 }                      & $-0.0237 - 0.1194 i$                  &  $ -0.0237 + 0.1194 i    $                           \\
  $\blacksquare$ & 0.9726                & 0.4213    & 0.8080               & \multicolumn{1}{r|}{0.0623} & \multicolumn{1}{r|}{1.6472} & \multicolumn{1}{r}{2.5723}  
\end{tabular}
\end{center}
\caption{Three distinct Whitham shock loci and right characteristic
  velocities using numerically computed periodic traveling waves.
  Compare with the approximate loci in Table \ref{tab:stokes}.}
\label{tab:numerics}
\end{table}
We test the accuracy of the approximate Whitham shock loci reported in
Table~\ref{tab:stokes} by solving the jump conditions
\eqref{eq:soli-per_jump1}--\eqref{eq:soli-per_jump3} with the family
of periodic traveling wave solutions obtained numerically in
Sec.~\ref{sec:stok-wave-appr}. Results from the numerical computations
are given in Table \ref{tab:numerics}.  The Whitham shock locus with
the largest root of Eq.~\eqref{eq:ubar_quartic} is well-approximated
to three digits (denoted by $\blacksquare$).  The corresponding right
characteristic velocities on this locus are accurate to one or two
digits.  The two Whitham shock loci corresponding to the two smaller
roots of Eq.~\eqref{eq:ubar_quartic} are less accurately approximated.
The reason for this is the Stokes approximation restriction
$k_+ \gg a_+^{1/4} = 1$ that requires a sufficiently large wavenumber
$k_+$.  The Whitham shock locus with the largest root $\ub_-$ also
exhibits the largest wavenumber $k_+$, hence is expected to be a
better approximation to the true Whitham shock locus, although good
agreement is notable given that $k_+ < 1$.  We remark that a numerical
search did not yield any other Whitham shock loci.

Two of the characteristic velocities for the locus denoted $\CIRCLE$
are complex, therefore we expect this locus to correspond to unstable
Whitham shocks.  We investigate stability questions in
Sec.~\ref{sec:two-shock-solns}.

\begin{figure}[h!]
  \begin{center}
    \includegraphics[scale=0.55]{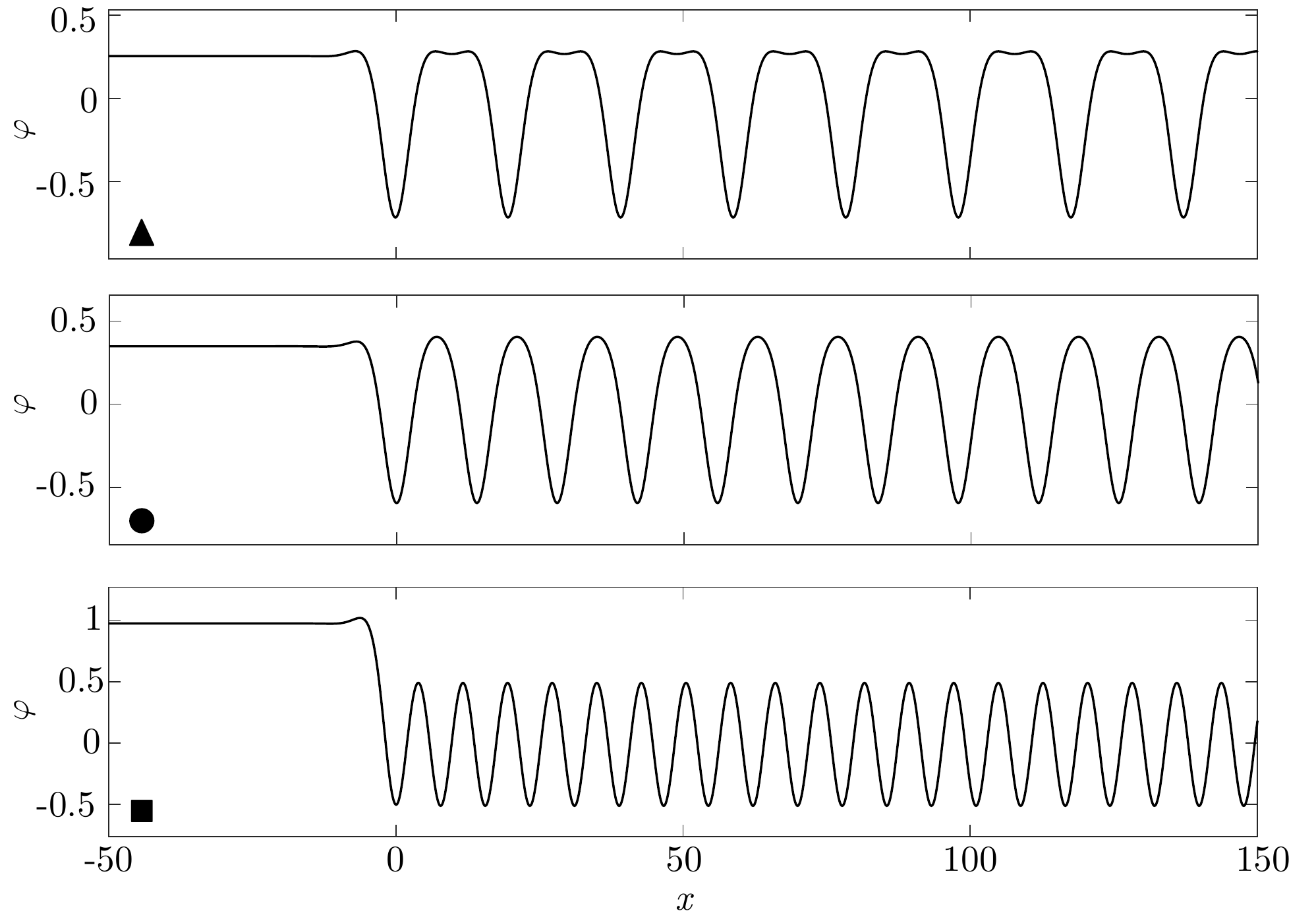}
    \caption{Computed traveling wave solutions corresponding to the
      periodic wave and mean values from the Whitham shock locus in Table
      \ref{tab:numerics}, matched by the symbols in the lower left
      corner. }
    \label{fig:WS}
  \end{center}
\end{figure}
We now compute heteroclinic traveling wave solutions whose zero
dispersion limit correspond to a Whitham shock from each of the loci
reported in Table~\ref{tab:numerics}, hence demonstrating Whitham
shock admissibility.  See the Appendix for computational details.  The
obtained solutions are depicted in Fig.~\ref{fig:WS}.  All three TW
solutions visually look similar for $x < 0$, which corresponds to the
left solitary wave $(\ub_-,a_-,0)$ of the Whitham shock
\eqref{eq:soli-per_Riemann}.  To investigate this further, we compare
this portion of the heteroclinic TW solutions with solitary wave
solutions that move with the shock velocity $V$ on the background
$\ub_-$ (the solitary wave amplitude $a_-$ is obtained from
$c(\ub_-,a_-,0) = V$). Figure \ref{fig:solitary_wave_comparison}
consists of the heteroclinic TW solutions depicted in
Fig.~\ref{fig:WS} overlaid with the left solitary wave (dashed red)
and the right periodic wave $\varphi(\theta;0,1,k_+)$ (dashed blue)
that form the corresponding admissible Whitham shock. Both the left
solitary waves and the right periodic waves are visually
indistinguishable from the heteroclinic TW in their respective regions
of validity. This corroborates our formulation and interpretation of
the zero dispersion limit of heteroclinic TW solutions as admissible
Whitham shocks where the left wave is a solitary wave that rapidly
transitions to a co-propagating finite amplitude periodic wave.

\begin{figure}[h!]
\begin{center}
\includegraphics[scale=0.45]{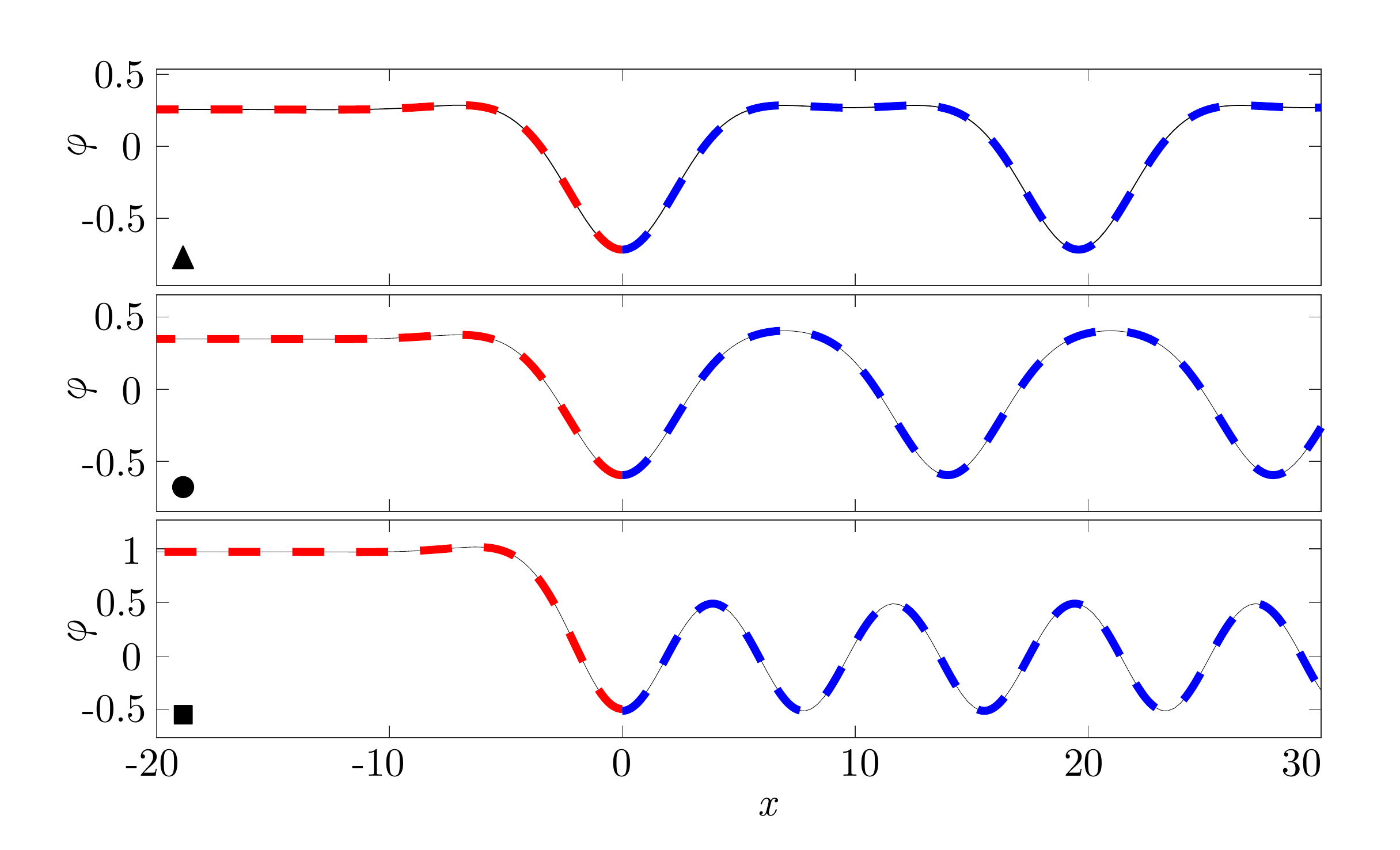}
\caption{Whitham shock structure via heteroclinic TW solutions
  (solid).  The left solitary wave (dashed red for $x < 0$) and right
  periodic wave (dashed blue for $x > 0$) from the Whitham shock loci
  reported in Table \ref{tab:numerics} are overlayed on the
  heteroclinic TW solution.  The symbols in the lower left corners
  coincide with those in Fig.~\ref{fig:WS} and Table
  \ref{tab:numerics}. }
\label{fig:solitary_wave_comparison}
\end{center}
\end{figure}

Admissible Whitham shock solutions and characteristics
\begin{equation}
  \label{eq:7}
  \Gamma_j = \left \{(x,t) ~ | ~ \frac{\mathrm{d}x}{\mathrm{d}t} =
    \lambda_j \right \}, \quad j = 1,2,3,
\end{equation}
corresponding to the loci $\blacktriangle$ and $\blacksquare$ in Table
\ref{tab:numerics} are shown in Figures \ref{fig:WS_chars1} and
\ref{fig:WS_chars2}, respectively.  Both Whitham shocks with real
characteristic velocities are weakly compressive in the first
characteristic family $\Gamma_1$ because
$\lambda_1^{(+)} < V = \lambda_1^{(-)}$, while the second
characteristic family is weakly expansive
$\lambda_2^{(-)} = V < \lambda_2^{(+)}$, and the third characteristic
family passes through the Whitham shock, decelerating
$V < \lambda_3^{(+)} < \lambda_3^{(-)}$ for $\blacktriangle$ and
accelerating $V < \lambda_3^{(-)} < \lambda_3^{(+)}$ for
$\blacksquare$ \cite{dafermos_hyperbolic_2009}.  Consequently, we
refer to this class of Whitham shock solutions as \textit{weak
  1-shocks}.

The degeneration of the periodic wave to a solitary wave on the left
state in Fig.~\ref{fig:WS} allows us to obtain three additional
admissible Whitham shock loci by reflecting the initial data
\eqref{eq:soli-per_Riemann} and the heteroclinic traveling wave across
$x = 0$. These solutions are weakly compressive in the second
characteristic family, $\Gamma_2$, hence are called \textit{weak
  2-shocks}. Further implications of this observation are discussed in
Sec.~\ref{sec:two-shock-solns}.

\begin{figure}[h!]
\begin{center}
\includegraphics[scale=0.4]{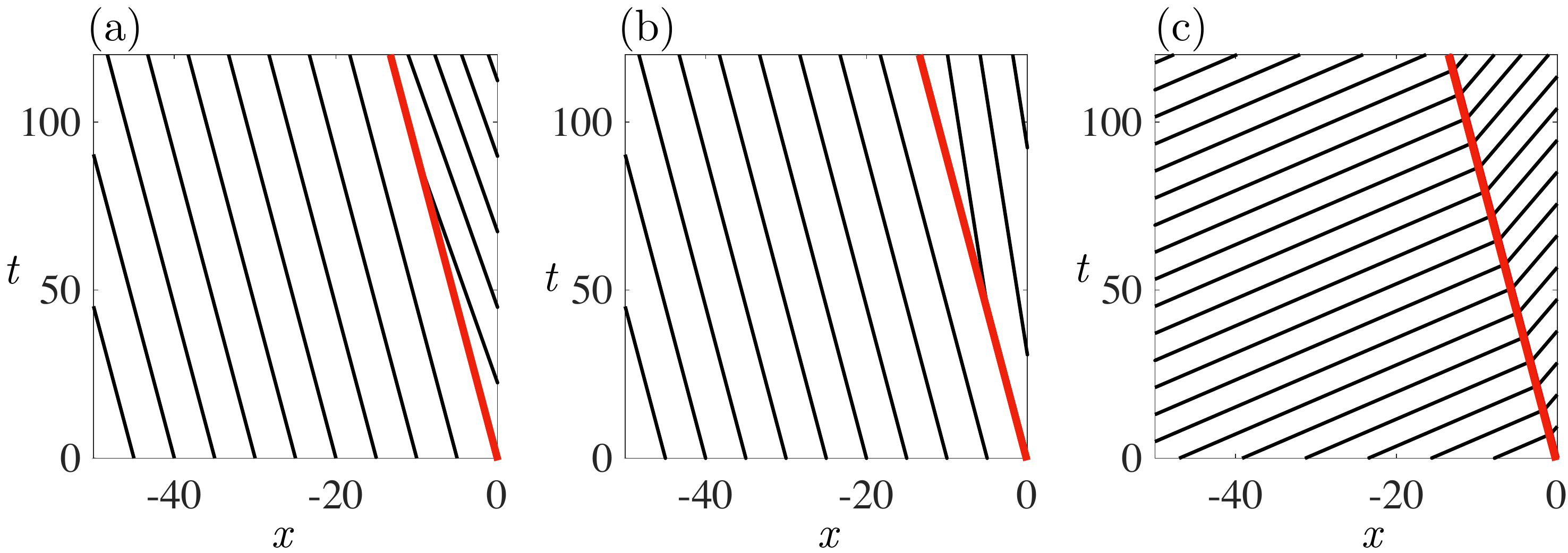}
\end{center}
\caption{Characteristics of the KdV5-Whitham modulation equations for
  the Whitham shock locus $\blacktriangle$ in Table
  \ref{tab:numerics}. (a) Weakly compressive 1-wave characteristics
  $\Gamma_1$, (b) Weakly expansive 2-wave characteristics $\Gamma_2$,
  and (c) Decelerating 3-wave characteristic family $\Gamma_3$. The
  shock is identified by the thick, red line. }
\label{fig:WS_chars1}
\end{figure}

\begin{figure}[h!]
\begin{center}
\includegraphics[scale=0.4]{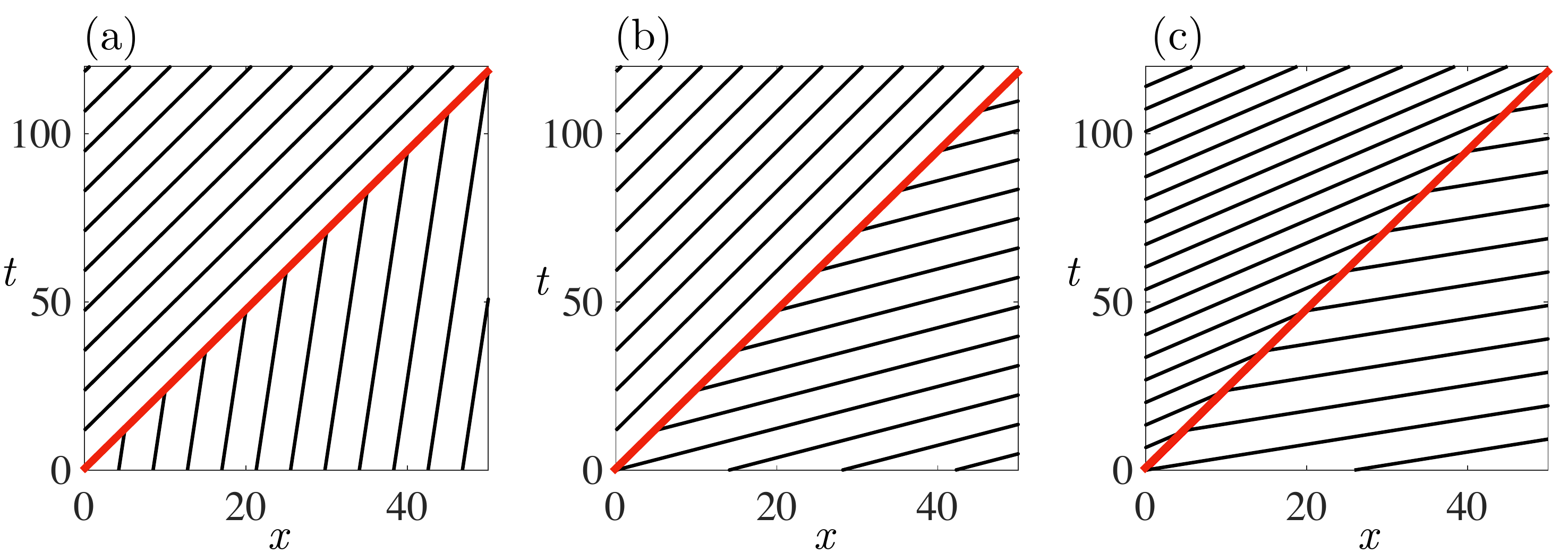}
\end{center}
\caption{Characteristics of the KdV5-Whitham modulation equations for
  the solution $\blacksquare$ in Table \ref{tab:numerics}. (a) Weakly
  compressive 1-wave characteristics $\Gamma_1$, (b) Weakly expansive
  2-wave characteristics $\Gamma_2$, and (c) Accelerating 3-wave
  characteristic family $\Gamma_3$. The shock is identified by the
  thick, red line.}
\label{fig:WS_chars2}
\end{figure}

\subsection{Periodic to periodic}
\label{sec:per-per}
Let us continue our computation of the jump conditions
Eqs.~\eqref{eq:RH1}--\eqref{eq:RH3} where we now consider shock
solutions of the modulation equations corresponding to one periodic
wave connected to another. We slightly modify the Riemann problem
\eqref{eq:soli-per_Riemann} from the previous section where the left
state is assumed to have zero wavenumber. In this section, we scale
the Whitham shock \eqref{eq:BCs} so that the left state consists of an
arbitrary periodic wave, and the right state consists of a periodic
wave with zero mean, unit amplitude and arbitrary wavenumber. This
parameter set results in the shock solution
\begin{align}\label{eq:per-per_riemann}
  (\ub,a,k)(x,t) = \begin{cases} (\ub_-,a_-,k_-) &x < Vt \\ (0,1,k_+)
    & x \geq  Vt \end{cases},
\end{align}
where $V = c_\pm$. The Rankine Hugoniot jump relations
\eqref{eq:RH1}--\eqref{eq:RH3} are a nonlinear system of three
equations that relate the four remaining parameters: $k_+$, $\ub_-$,
$a_-$, and $k_-$. We use $\ub_-$ as the continuation parameter to
obtain the one-parameter family of Whitham shock loci
\begin{align*}
  k_+ = k_+(\ub_-), \quad a_- = a_-(\ub_-), \quad k_- = k_-(\ub_-) .
\end{align*}

Attempts to solve the jump conditions \eqref{eq:RH1}--\eqref{eq:RH3}
approximately by using the Stokes wave approximation \eqref{eq:stokes}
yield no nontrivial asymptotically valid solutions, aside from the
solitary wave to periodic shocks discussed previously. As a result, we
rely on numerical continuation along the parameterized curves
$(k_+,a_-,k_-)(\ub_-)$ where we start from the three Whitham shock
loci calculated in Section \ref{sec:soli-to-per} so that $k_- = 0$ and
$\ub_-$, $k_+$, $V$ are initialized from Table \ref{tab:numerics}
($a_-$ satisfies $c(\ub_-,a_-,0) = V$). We numerically continue
solutions to the jump conditions \eqref{eq:RH1}--\eqref{eq:RH3} with
Matlab's fsolve function for decreasing values of
$\ub_-$. Continuation is terminated when the value
$k_+(\ub_-) < 10^{-3}$ is reached, indicating that the oscillatory
wavetrain in the left far-field is nearly a solitary wave. Figures
\ref{fig:per-per_phase_diag}(a), (c) and (e) present the three Whitham
shock loci.  We also compute heteroclinic TW solutions for select
$\ub_-$, seeded with left/right states from Whitham shock loci, and
plot them in Figures \ref{fig:per-per_phase_diag}(b), (d), and (f).
Consequently, we find that all three periodic to periodic Whitham
shock loci are admissible.

\begin{figure}
\begin{center}
\includegraphics[scale=0.45]{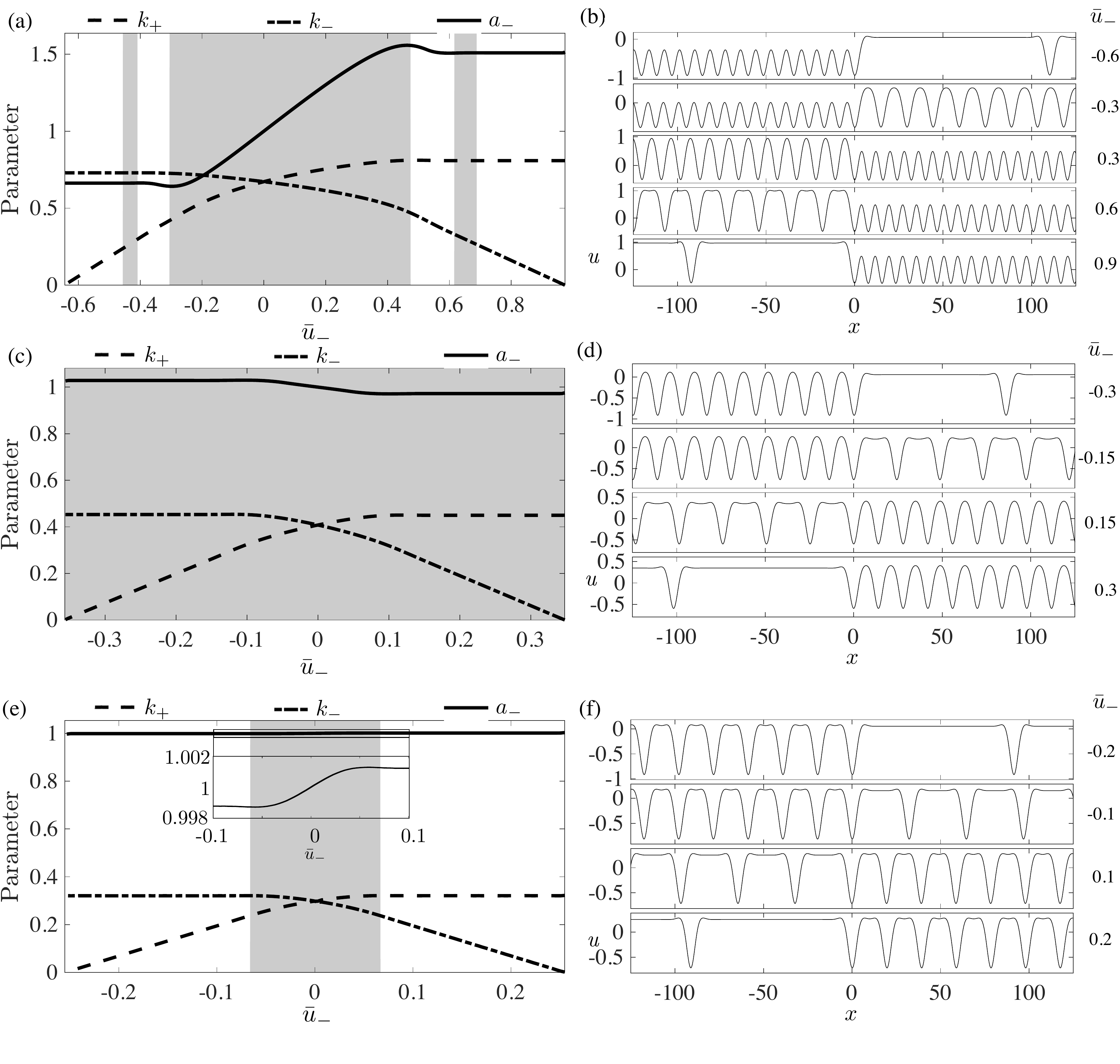}
\end{center}
\caption{Continuation curves of admissible periodic to periodic
  Whitham shock solutions \eqref{eq:per-per_riemann} and corresponding
  heteroclinic traveling wave solutions.  (a),(c),(e) Shock loci for
  the periodic wave parameters $k_-(\ub_+)$, $a_+(\ub_+)$,
  $k_+(\ub_+)$. (b), (d), (f) Example heteroclinic solutions
  corresponding to the shock curves (a), (c), and (e)
  respectively. Gray, shaded areas in (a), (c) and (e) correspond to
  complex characteristic velocities and modulational instability. The
  zoomed-in inset in Figure (e) demonstrates that the amplitude is not
  constant across the solution curve. }
\label{fig:per-per_phase_diag}
\end{figure}


\begin{figure}
\begin{center}
\includegraphics[scale=0.3]{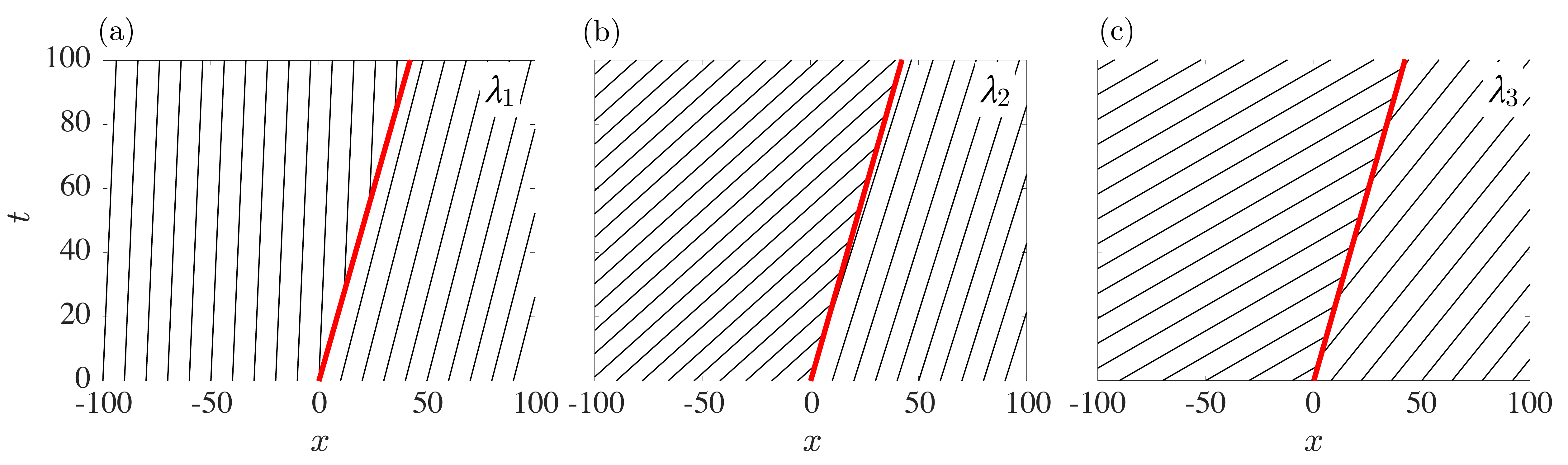}
\caption{(a)-(c) Characteristic families of the undercompressive
  Whitham shock in Fig.~\ref{fig:per-per_phase_diag}(b) with
  $\ub_+ = 0.6$ where
  $\lambda_1^{(\pm)} < V < \lambda_2^{(\pm)} < \lambda_3^{(\pm)}$. The
  shock trajectory is depicted with the thick red curve.}
\label{fig:perper_characteristics}
\end{center}
\end{figure}

Dynamic stability of admissible Whitham shocks is determined by the
hyperbolicity of the Whitham modulation equations (recall
Fig.~\ref{fig:characteristics}). Whitham shocks with real
characteristic velocities are identified by the white background in
Fig.~\ref{fig:per-per_phase_diag} (a), (c), and (e) while the gray
background denotes complex characteristic velocities
$\mathrm{Im} \lambda_1 \ne 0$ and $\mathrm{Im} \lambda_2 \ne 0 $ and
therefore a modulationally unstable heteroclinic TW.  This will be
investigated further in Sec.~\ref{sec:two-shock-solns}.

In Figure \ref{fig:perper_characteristics}, we plot the numerically
computed characteristic curves $\Gamma_j$, $j = 1,2,3$ and the shock
trajectory for the periodic to periodic Whitham shock in
Fig.~\ref{fig:per-per_phase_diag}(a) where $\ub_- = 0.6$ (the
corresponding heteroclinic TW is shown in
Fig.~\ref{fig:per-per_phase_diag}(b)).  We observe that all
characteristic families pass through the shock and satisfy
\begin{equation}
  \label{eq:8}
  \lambda_1^{(\pm)} < V < \lambda_2^{(\pm)} < \lambda_3^{(\pm)} .
\end{equation}
Shocks that are not compressive in any characteristic family violate
the Lax entropy conditions.  In the conservation law community,
non-Lax shocks can be identified as admissible when they are the limit
of vanishing dissipative-dispersive heteroclinic TWs
\cite{jacobs_travelling_1995,lefloch_hyperbolic_2002}.  In such cases,
they are referred to as \textit{undercompressive}.  Here, we find that
the vanishing dispersion limit of heteroclinic periodic to periodic
TWs generally converge to undercompressive Whitham shocks.  To prove
this, consider Fig.~\ref{fig:characteristic_phase_velocities} where
the scaled characteristic velocities $(\lambda_j - \ub)/a$, $j = 1,2$
and shock velocity $(V - \ub)/a$ are plotted in the scaled coordinate
$\tilde{k} = a^{-1/4}k$.  Since $\lambda_3 > \lambda_2$, we see that
all admissible Whitham shocks with real characteristic velocities
exhibit the undercompressive relations \eqref{eq:8} except for the
very narrow band of waves $0.662 \lesssim k a^{-1/4} \lesssim 0.672$.
Because this band is so narrow in parameter space, we do not
investigate these solutions any further.  Undercompressive shocks were
first described for general $2 \times 2$ systems in
\cite{shearer_solution_1987} and have since been observed, for
example, in fluid dynamics \cite{bertozzi_undercompressive_1999}.

\begin{figure}[h!]
\begin{center}
\includegraphics[scale=0.45]{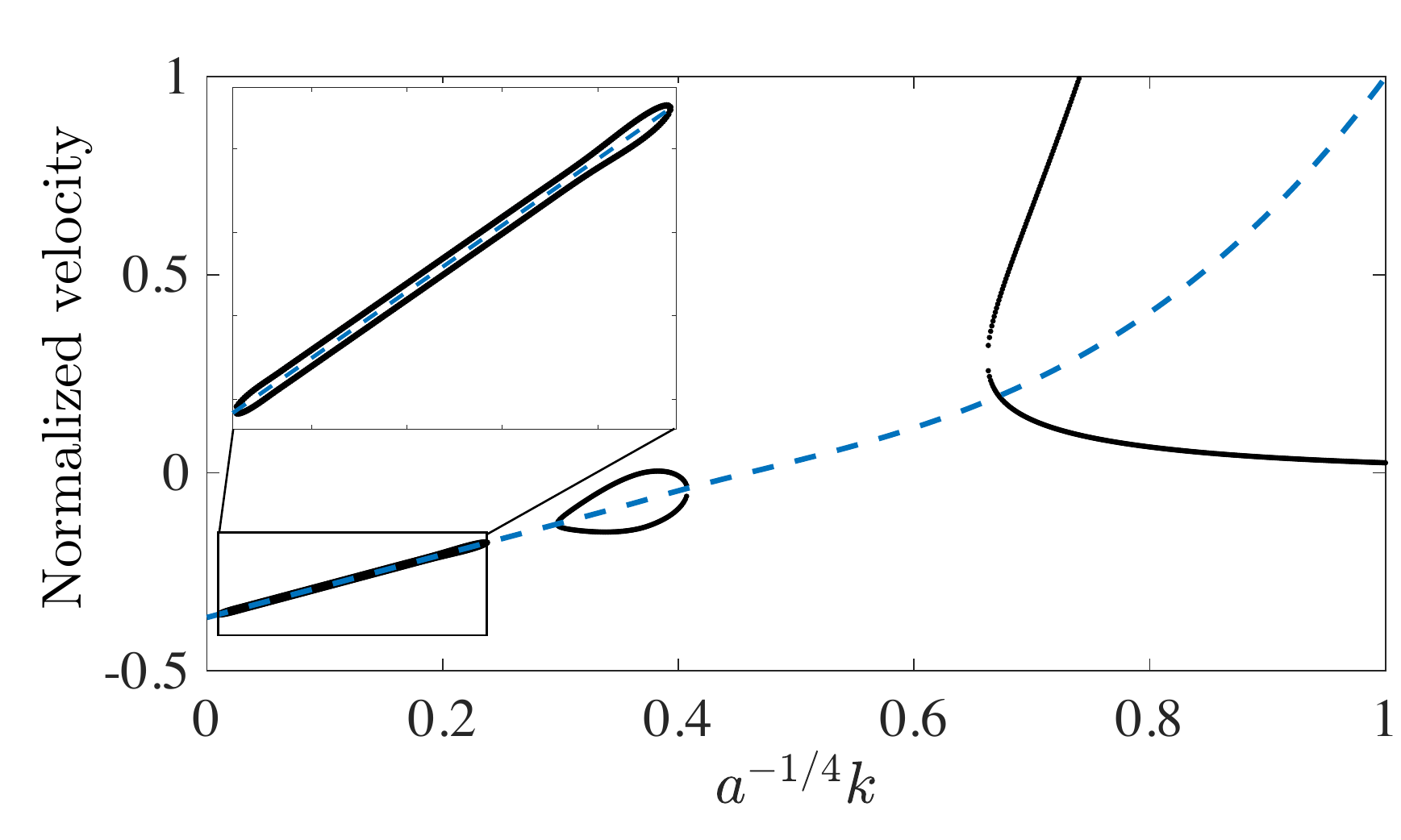}
\caption{Numerically computed normalized Whitham shock velocity
  $(V-\ub)/a$ (blue, dashed) and normalized, purely real
  characteristic velocities $(\lambda_1-\ub)/a < (\lambda_2-\ub)/a$
  (black dots). The inset is a zoomed in view of the phase velocity
  and characteristics velocities for small $a^{-1/4}k$.}
\label{fig:characteristic_phase_velocities}
\end{center}
\end{figure}

\subsection{Two shock solutions}
\label{sec:two-shock-solns}

In Sections \ref{sec:soli-to-per} and \ref{sec:per-per}, we
established the admissibility of KdV5-Whitham shocks by computing
heteroclinic traveling wave solutions to the KdV5 equation
\eqref{eq:kdv5}.  Each locus of admissible Whitham shocks depicted in
Fig.~\ref{fig:per-per_phase_diag}(a), (c), (e) include, up to scaling
and Galilean symmetries, pairs of shocks with reflected $+$ and $-$
states. The implication of this observation is that it is possible to
arrange for co-propagating Whitham shocks with the same equilibrium on
the left and right, respectively, as well as the periodic wave on the
right and left, respectively.  Can we compute homoclinic TWs
consisting of two co-propagating Whitham shocks? Furthermore, can this
construction be extended to solutions tending to a periodic wave in
the far-field?  We affirmatively answer these questions now.
 
We formulate the double Whitham shock problem as a locally periodic
wave that terminates after a finite number of oscillation periods and
transitions to a different periodic wave, possibly a solitary wave, in
the far field. To this end, let us consider solutions of the form
\begin{align}
  \label{eq:riemann_2shock}
  (\ub,a,k)(x,t) = \begin{cases}
    (0,1,k_{\rm i}) & |x - Vt| \leq \frac{n\pi}{k_{\rm i}}\\
    (\ub_{\rm o},a_{\rm o},k_{\rm o}) & |x- Vt| \geq \frac{n\pi}{k_{\rm i}}
  \end{cases},
\end{align}
where the subscript i refers to the inner periodic wave with $n$
oscillations and subscript o denotes the outer periodic wave in the
far field. Following the normalization in Sec.~\ref{sec:per-per}, we
have normalized the inner periodic solution to have zero mean and unit
amplitude.

Figure \ref{fig:2_shock} depicts the structure of an example
homoclinic TW solution in the physical plane overlaid with
co-propagating weak 1-shock and weak 2-shock solutions that satisfy
the jump conditions for zero wavenumber in the far field, i.e.,
$k_{\rm o} =0$. In general, the far-field periodic wave can be chosen
from a continuous set of nonzero wavenumbers
(cf.~Fig.~\ref{fig:per-per_phase_diag} for the range of possible
values, equating $k_{\rm o} = k_-$).
\begin{figure}[h!]
  \begin{center}
    \includegraphics[scale=0.4,page=1]{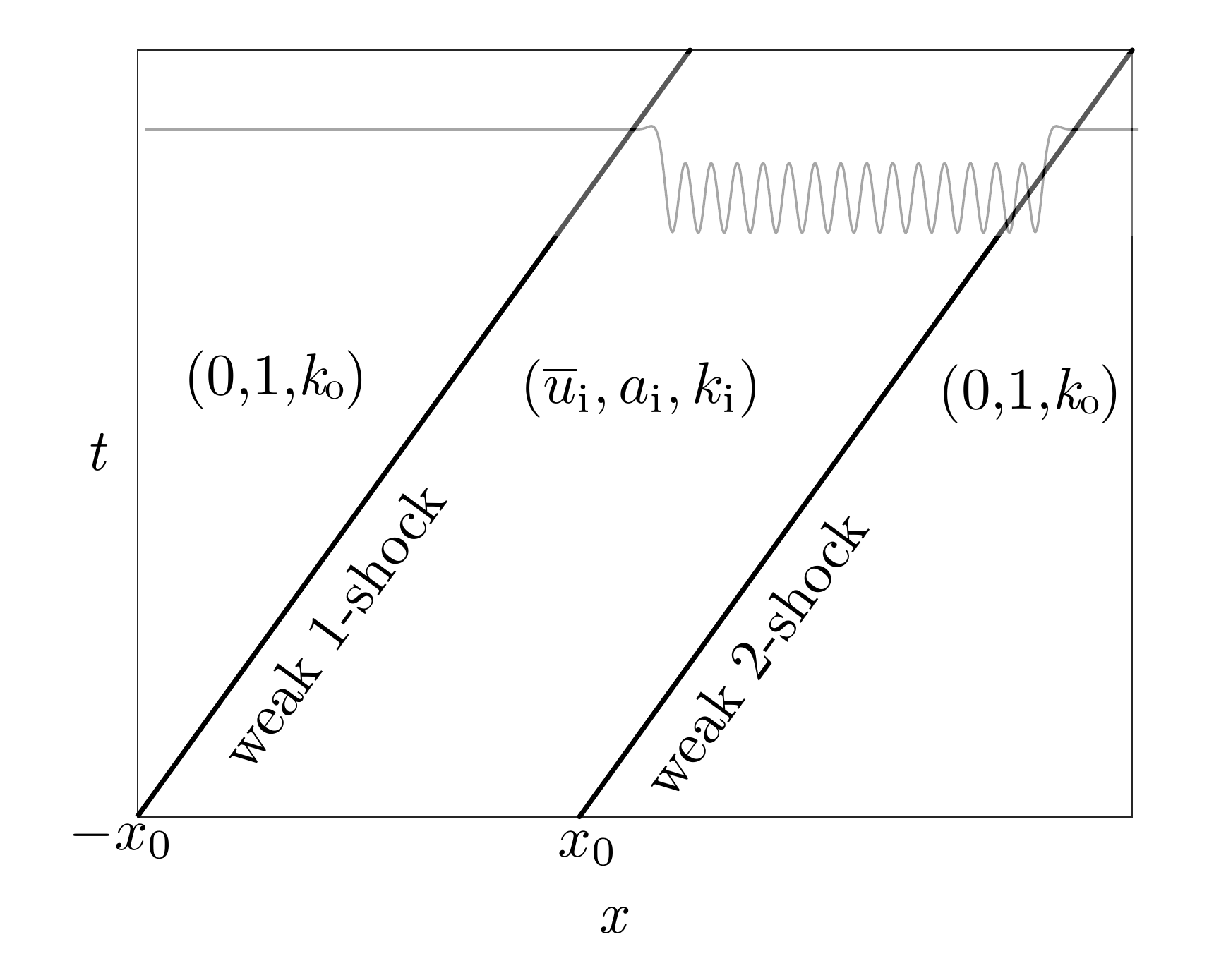}
    \caption{Double Whitham shock solution satisfying the jump
      conditions \eqref{eq:RH1}--\eqref{eq:RH3} from the interior
      periodic wave $(0,1,k_{\rm i})$ to the outer solitary wave
      $(\ub_{\rm o},a_{\rm o},0)$. The shocks are identified by the
      solid black curves and the solution in physical space is shown
      in gray.}
    \label{fig:2_shock}
  \end{center}
\end{figure}

Utilizing the three Whitham shock loci in Table \ref{tab:numerics},
the scaling symmetries \eqref{eq:19} and \eqref{eq:20}, and double
Whitham shock data \eqref{eq:riemann_2shock}, we compute homoclinic TW
profiles and initialize the KdV5 equation with these solutions plus
small, band-limited noise (see Appendix for computational details).
The time evolution of the perturbed homoclinic TW solutions then
allows us to corroborate our hypothesis that traveling waves comprised
of modulationally unstable portions will be unstable.  Indeed, in
Fig.~\ref{fig:localized_2shock}, we depict the evolution of three
distinct homoclinic solutions with $n = 25$ oscillation periods.  Note
that our computational approach enables for the construction of
homoclinic solutions with any number of oscillation periods. The
intermediate periodic wave in Fig.~\ref{fig:localized_2shock}(b) lies
on the Whitham shock locus $\CIRCLE$ in Table \ref{tab:numerics},
which exhibits complex characteristic velocities.  As expected, the
homoclinic TW solution is unstable. In
Figs.~\ref{fig:localized_2shock} (a) and (c), the homoclinic
oscillatory defect is comprised of periodic waves on the Whitham shock
loci $\blacktriangle$ and $\blacksquare$ in Table \ref{tab:numerics},
respectively, where the modulation equations are strictly
hyperbolic. Perturbed homoclinic traveling waves are numerically
stable over a long integration time.

These homoclinic TWs are similar to solutions obtained in the
investigation of reversible Hamiltonian systems
\cite{buffoni_bifurcation_1996,champneys_homoclinic_1998}, though we
have presented an alternative approach by which multi-pulse solutions
with an arbitrary number of peaks can be computed from the Whitham
shock loci in Table \ref{tab:numerics}.

\begin{figure}
\begin{center}
\includegraphics[scale=0.25,page = 1]{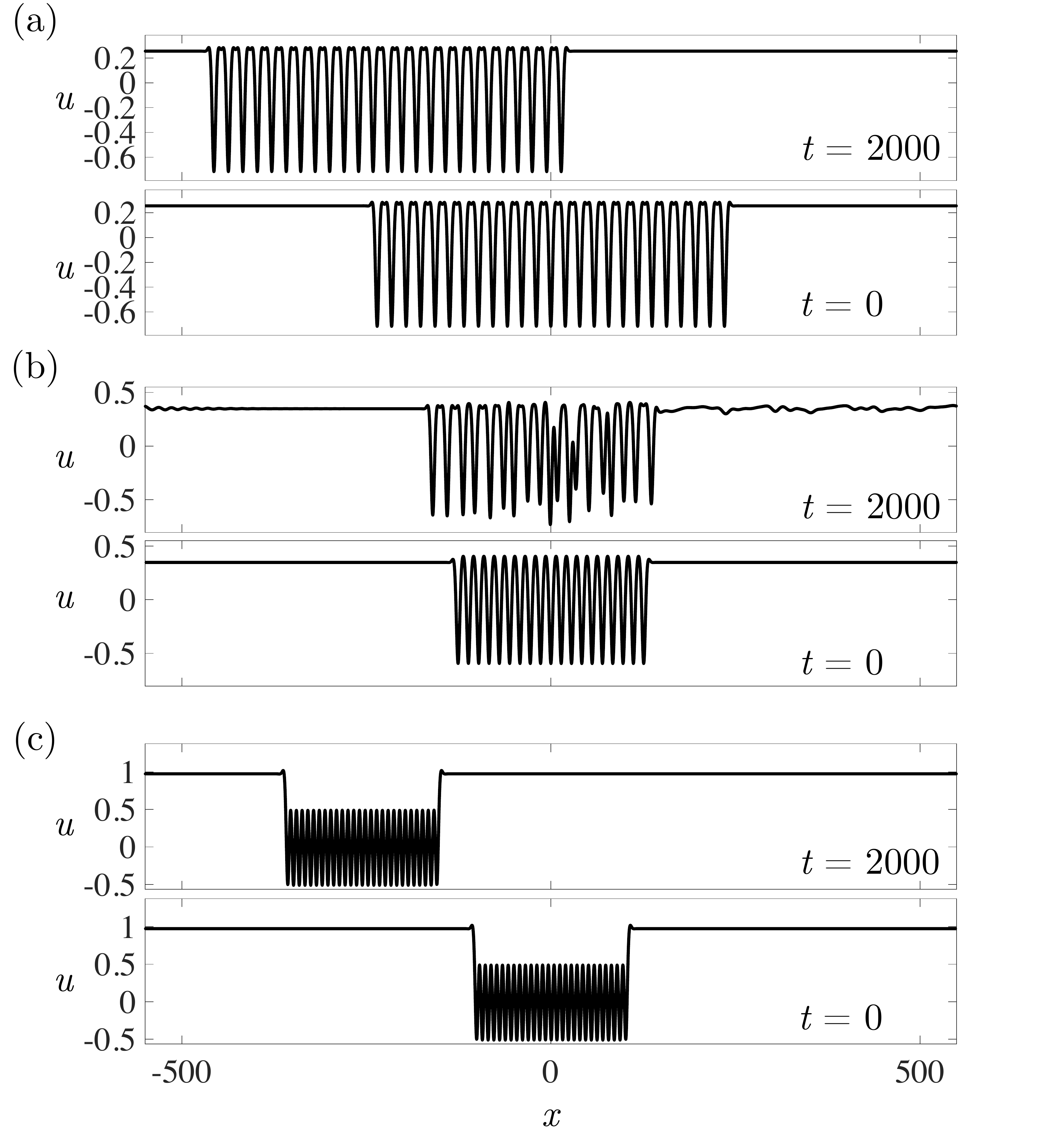}
\end{center}
\caption{Evolution of perturbed homoclinic TW solutions whose
  parameters in Eq.~\eqref{eq:riemann_2shock} with $n = 25$ lie on the
  Whitham shock loci in Table \ref{tab:numerics}. Initial data at
  $t = 0$ is evolved to $t = 2000$. (a) Double Whitham shocks
  $\blacktriangle$ in Table \ref{tab:numerics} with real
  characteristics.  (b) Double Whitham shocks $\CIRCLE$ with complex
  characteristics. (c) Double Whitham shocks $\blacksquare$ with real
  characteristics.  }
\label{fig:localized_2shock}
\end{figure}

Similarly, homoclinic TWs consisting of distinct inner and outer
periodic waves can be constructed from Whitham shock loci shown in
Fig.~\ref{fig:per-per_phase_diag}. Numerical computations of initially
perturbed homoclinic traveling waves of this type along with their
long time evolution are displayed in Fig.~\ref{fig:perper_2shock},
again revealing numerically stable evolution for those solutions with
all real characteristic
velocities---Fig.~\ref{fig:perper_2shock}(a,d)---and numerically
unstable evolution for those double Whitham shocks that exhibit
complex characteristic velocities---Fig.~\ref{fig:perper_2shock}(b,c).

\begin{figure}
\begin{center}
\includegraphics[scale=0.45,page = 1]{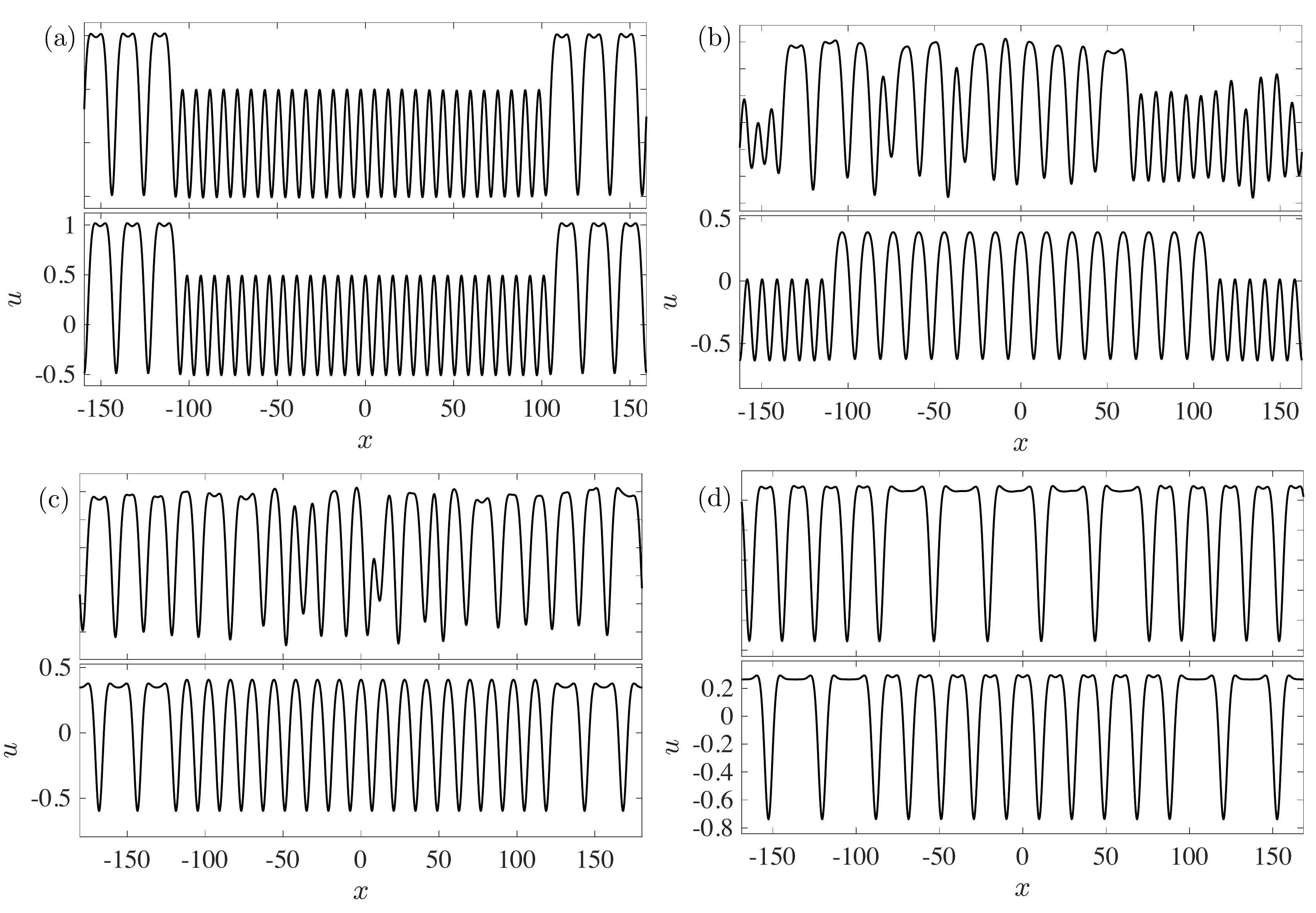}
\end{center}
\caption{Evolution of perturbed TWs corresponding to co-propagating
  Whitham shocks shown at initial time $t = 0$ and final time
  $t = 1500$. Parameter values are selected from those depicted in
  Figure \ref{fig:per-per_phase_diag}. (a) TW with parameters in
  Fig. \ref{fig:per-per_phase_diag}(a) with
  $\ub_{\rm o } = \ub_- = 0.6$.  (b) TW with parameters in
  Fig. \ref{fig:per-per_phase_diag}(a) with
  $\ub_{\rm o } = \ub_- = -0.3$. (c) TW with parameters
  \ref{fig:per-per_phase_diag}(c) with $\ub_{\rm o } = \ub_- = 0.15$.
  (d) TW with parameters in Fig. \ref{fig:per-per_phase_diag}(e) with
  $\ub_{\rm o } = \ub_- = 0.1$.  }
\label{fig:perper_2shock}
\end{figure}

\section{Applications and extensions}
\label{sec:applications}
In this section, we discuss applications of the theory presented in
this manuscript and various physical systems to which it applies. We
construct the modulation solution for the TDSW depicted in
Fig.~\ref{fig:TDSW_intro} for the KdV5 equation \eqref{eq:kdv5} and
discuss how the same construction can be carried out for the Kawahara
equation \eqref{eq:kawahara}, where these structures were numerically
coputed in \cite{sprenger_shock_2017}. This section concludes with
numerical experiments simulating the Riemann problem
\eqref{eq:riemann_data} for the Whitham equation
\cite{whitham_linear_1974,whitham_variational_1967}, which describes
weakly nonlinear, fully dispersive gravity-capillary water waves.

\subsection{Dam break problem for the modulation equations: Traveling dispersive shock waves}
\label{sec:kdv5_TDSW}

Several recent works have considered step initial data for systems
with higher order dispersion
\cite{sprenger_shock_2017,hoefer_modulation_2019,conforti_resonant_2014,smyth_dispersive_2016,el_radiating_2016}. Among
each of these studies, one key observation from numerical simulations
was made: the nonlinear structure arising from smoothed step initial
data developed into a wave with two distinct regions, depicted in
Fig.~\ref{fig:TDSW_intro}. We now identify the two distinct regions as
follows. At the left, trailing edge of the TDSW, there is a
heteroclinic traveling wave corresponding to an admissible Whitham
shock solution of the modulation equations
(cf.~Sec.~\ref{sec:soli-to-per}). The heteroclinic traveling wave
terminates at a partial DSW described by a continuous, simple wave
solution of the modulation equations
(cf.~Sec.~\ref{sec:strongly-nonl-regime}) at the leading edge. Because
the modulated leading edge initiates from an oscillatory wavetrain,
rather than a solitary wave, it is termed a partial DSW. Similar
partial DSWs have been observed in the initial boundary value problem
of nonlinear systems with lower order dispersion
\cite{hoefer_piston_2008,marchant_initial_2002}. We now explain this
Whitham shock-rarefaction modulation solution to the intial value
problem
\begin{align}\label{eq:kdv5_riemann}
  u_t + uu_x + u_{xxxxx} = 0 , \qquad u(x,0) = \begin{cases}1 & x
    < 0 \\ 0  & x \geq 0 \end{cases}.
\end{align}

Numerical simulations of \eqref{eq:kdv5_riemann} in
Fig.~\ref{fig:TDSW_numerics} show two distinct regions in the
solution. The boundaries of these regions are defined by identifying
the following velocities: the trailing edge Whitham shock velocity,
$V_{\rm s}$, the intermediate velocity $V_i$ joining the TW to the
partial DSW and $V_+$ the leading, harmonic edge velocity where the
partial DSW terminates when the modulation amplitude vanishes
$a \to 0$. If we denote the parameters of the periodic portion of the
heteroclinic wave by $(\ub_r,a_r,k_r)$ and the leading harmonic wave
edge wavenumber by $k_+$, then $V_i = \lambda_2(\ub_r,a_r,k_r)$ and
$V_+ = \lambda_2(0,0,k_+)$.  The TDSW traveling wave portion bears a
striking resemblance to the traveling wave that is identified by the
symbol $\blacksquare$ in Fig.~\ref{fig:WS}. The Whitham shock locus
for this solution in Table \ref{tab:numerics} is well approximated
using a Stokes wave in Table \ref{tab:stokes}, scaled by
\eqref{eq:19}, \eqref{eq:20} so that $\ub_- \to 1$ and
$\ub_+$ is a to-be-determined parameter. The intermediate periodic
wave properties are then given as a one parameter family
\begin{align*}
  a_{\rm r} = \frac{1}{\zeta}(1-\ub_{\rm r}), \quad k_{\rm r}^4 =
  \left[\frac{4\zeta^2 + 3}{15} - \frac{1}{80\zeta^2}
  \right](1-\ub_{\rm r}) \quad V_s = \left[ \frac{1}{2} -
  \frac{1}{16\zeta^2} \right](1-\ub_{\rm r}) + \ub_{\rm  r},
\end{align*}
where $\zeta \approx 0.9724$ is the positive square root of the
largest, positive real root of the quartic polynomial
\eqref{eq:ubar_quartic}, and $\ub_{\rm r}$ is the mean to be determined
by matching to a simple wave describing the transition from the
oscillatory wavetrain to the leading, constant level, $\ub = 0$
propagating with velocity $V_+$. The computation of the self-similar
simple wave curve is described in Sec.~\ref{sec:strongly-nonl-regime},
denoted by $\mathbf{q}(x/t)$. The Whitham shock locus, 2-wave curve
and the far-field mean values completely determine the Whitham
shock-rarefaction modulation solution
\begin{align*}
  (\ub,a,k)(x,t) = \begin{cases}(1,a_-,0) &  V_{\rm s} t < x \\
    (\ub_{\rm r},a_{\rm r},k_{\rm r}) & V_{\rm s} t \leq x < V_{\rm i} t\\
    \mathbf{q}(x/t)& V_{\rm i }t  \leq x < V_{+} t\\
    (0,0,k_{+}) &  V_+ t \leq x
\end{cases}.
\end{align*}
This shock-rarefaction solution can be viewed as the higher dispersion
analog of the shock-rarefaction solution for the shallow water dam
break problem or the gas dynamics piston problem.

\begin{figure}
  \begin{center}
    \includegraphics[scale=0.3]{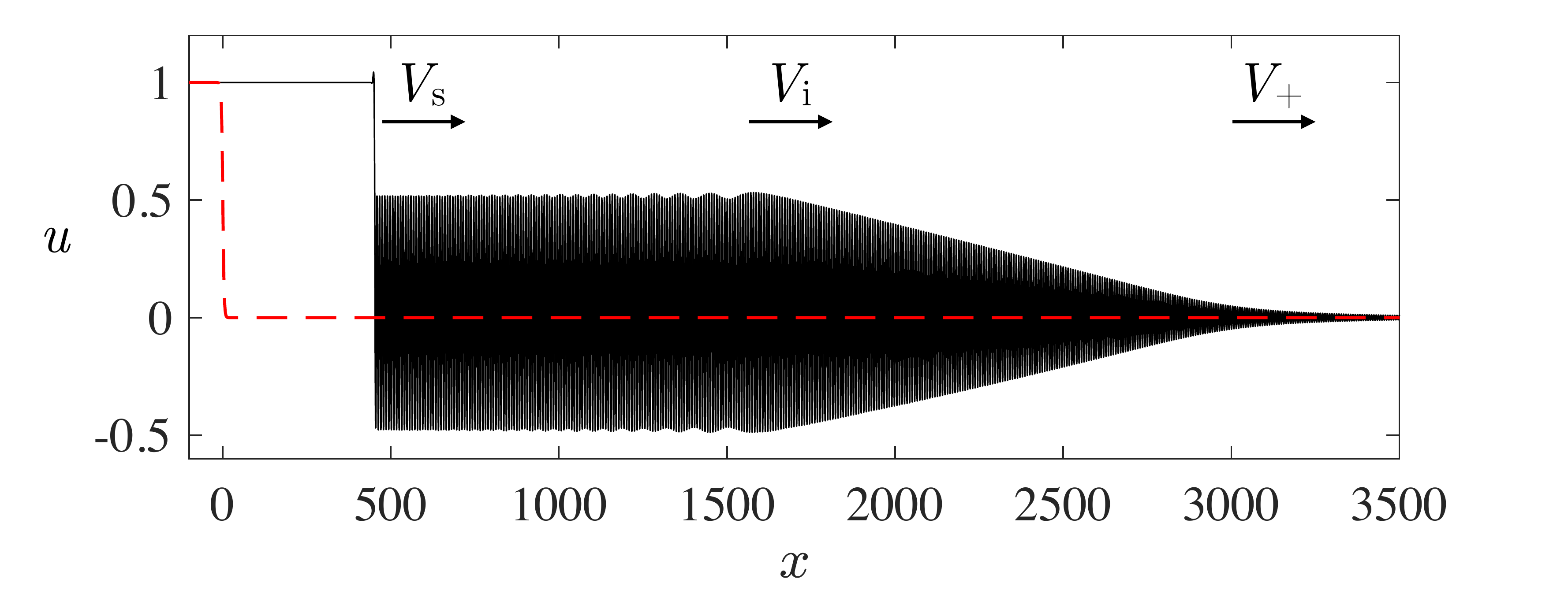}
    \label{fig:TDSW_numerics}
  \end{center}
  \caption{Numerical evolution of the initial value problem
    \eqref{eq:kdv5_riemann} with smoothed step data (dashed red).
    Three distinct velocities are noted. }
\end{figure}

In Figure \ref{fig:tdsw_numerics_comparison}, we compare the Whitham
shock-rarefaction solution of the modulation equations with numerical
simulations of smoothed step initial data.  Numerically extracted
values of the modulation parameters exhibit excellent agreement with
the Whitham shock-rarefaction modulation solution. The small
oscillations in the modulation variables that occur along the
intermediate equilibrium state are due to higher order dispersive
effects that are not captured by leading order modulation theory.

\begin{figure}[h!]
  \includegraphics[scale=0.4]{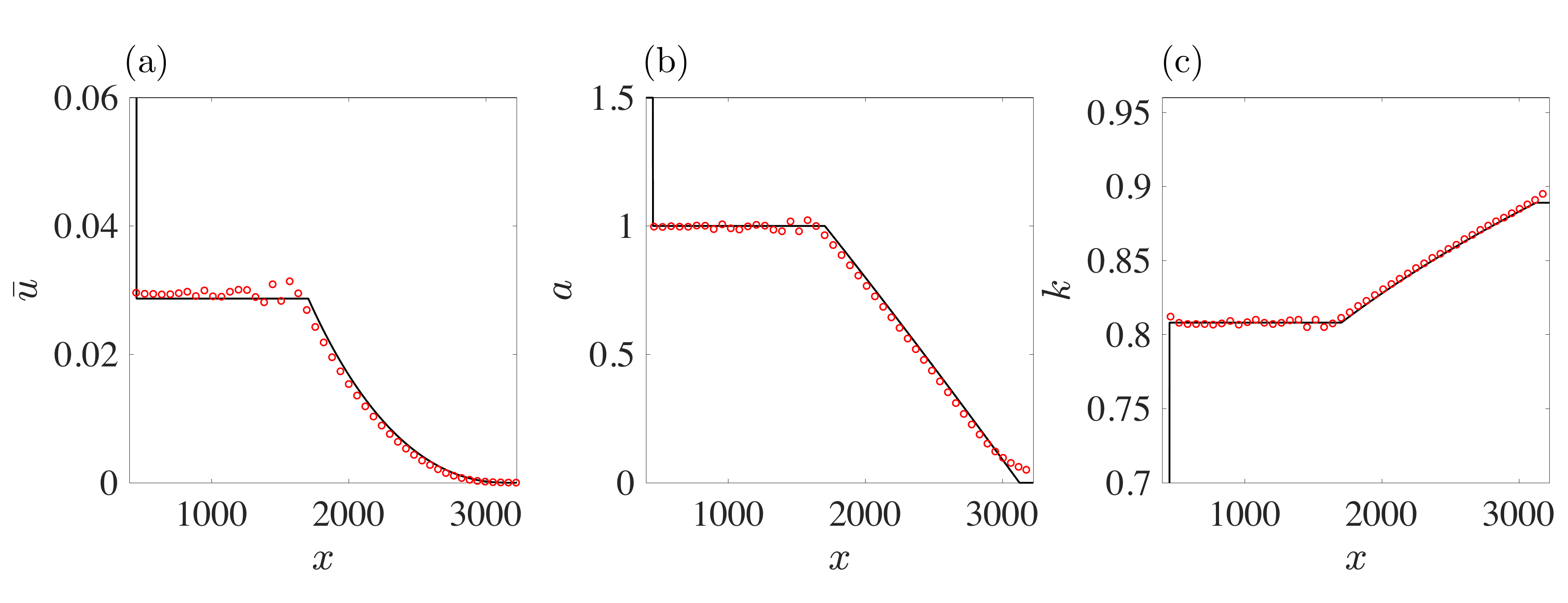}
  \caption{Extracted modulation parameters from numerical simulations
    of \eqref{eq:kdv5_riemann} (circles) with the shock-rarefaction
    solution of the Whitham modulation equations at $t = 1000$ (solid
    curves).}
\label{fig:tdsw_numerics_comparison}
\end{figure}

\subsection{Traveling waves in gravity-capillary water waves}
\label{sec:tdsw-constr-kawah}
In this section, we consider two models of gravity-capillary water
waves that incorporate higher order dispersive effects.  We recall
the linear dispersion relation for right-moving water waves on a
constant depth, normalized to $1$, with surface tension is given in
nodimensional form by
\begin{align*}
  \omega(k) &= \sqrt{k(1 + Bk^2) \tanh(k)}, \\
            & \sim k  + \frac{1}{6}\left(B - \frac{1}{3}\right)k^3 +
              \frac{1}{360}\left(19 - 30B - 45B^2\right)k^5 +
              \mathcal{O}(k^7).  
\end{align*}
In this section, we are motivated by the physical problem in which $B$
is sufficiently close to $1/3$, so that $B-1/3 \sim k^2$ when
$k \ll 1$ so that third and fifth order dispersion are in competition.

\subsubsection{Kawahara equation}

A weakly nonlinear, long wave approximation for water waves with Bond
number $B$ sufficiently close to but less than $1/3$ is the Kawahara equation
\eqref{eq:kawahara}, where $\alpha = 1$\cite{hunter_existence_1988}. Previous numerical simulations of the
Riemann problem \eqref{eq:riemann_data} for eq.~\eqref{eq:kawahara}
show that for sufficiently large values of the initial jump $\Delta$,
the result is a TDSW \cite{sprenger_shock_2017}. Motivated by these
numerical experiments, we now construct traveling wave solutions of
the underlying PDE by computing the Whitham shock loci of the
modulation system using the weakly nonlinear Stokes approximation.

The Kawahara Whitham modulation equations are obtained by averaging the first two conservation laws 
%
\begin{align*}
  (\ub)_t + \left( \frac{1}{2}\ol{\varphi^2}  \right)_x & = 0 , \\
  \left(\frac{1}{2}\ol{\varphi^2}\right)_t + \left( \frac{1}{3}\ol{\varphi^3} - \frac{3}{2} k^2\ol{\varphi_\theta^2}   + \frac{5}{2}k^4\ol{\varphi_{\theta\theta}^2}\right)_x & = 0, \\
  k_t + (ck)_x & = 0,
\end{align*}
where $\varphi = \varphi(\theta)$ is a periodic, traveling wave
solution to \eqref{eq:kawahara}.  Based on numerical computations in
\cite{sprenger_shock_2017}, we expect that a heteroclinic TW solution
exists and connects two disparate far-field waves in which the
leftmost state is a solitary wave with mean $\ub_-$. The jump
conditions are
\begin{align}
  -V\ub_-+ \left( \frac{\ub_-^2}{2}  - \frac{1}{2}\ol{\varphi_+^2}\right)& = 0, \label{eq:kaw_jump1}\\
  -V\left(\ub_-^2 - \ol{\varphi_+^2}\right) + \left(\frac{\ub_-^3}{3} - \frac{1}{3}\ol{\varphi^3_+} + \frac{3}{2}k_+^2\ol{\varphi_{+,\theta}^2}- \frac{5}{2}k_+^4\ol{\varphi_{+,\theta\theta}^2}\right) & = 0, \label{eq:kaw_jump3}\\
  -V k_+ + \omega_+ & = 0.
\end{align}
We use the leading, periodic solution in the form of a Stokes wave
\cite{sprenger_shock_2017}
\begin{align}\label{eq:kaw_stokes}
  \varphi = \ub + a \cos \theta  + \frac{a^2}{12k^2(1-5k^2)} + \ldots,
\end{align}
with approximate phase velocity
\begin{align}
  c = \ub - k^2 + k^4 + \frac{a^2}{24k^2(1-5k^2)} + \ldots .
\end{align}
Inserting the Stokes expansion \eqref{eq:kaw_stokes} into the jump
conditions \eqref{eq:kaw_jump1}--\eqref{eq:kaw_jump3}, we then solve
the algebraic system numerically for a fixed value of $\ub_- = \Delta$
and the ensuing wave parameters are then compared to numerical
solutions of the Riemann problem \eqref{eq:riemann_data}. In Figure
\ref{fig:kaw_tdsw}, we compare the periodic wave parameter values from
the jump conditions using the Stokes approximation
\eqref{eq:kaw_stokes} and those obtained from numerical computations of heteroclinic traveling waves using Matlab's bvp5c. Details of the numerical method

 Since we are using the Stokes approximation for the leading periodic wave, we only expect
agreement with numerical simulations for sufficiently small wave
amplitude. However, the results depicted in Fig.~\ref{fig:kaw_tdsw}
demonstrate that even for quite large values of $a_+$, the Stokes
approximation remains very accurate. Notice further that the numerical
computations of traveling waves, whose parameters are depicted by the
red dots, cease for $\Delta \lesssim 0.58$. Here, numerical fixed
point computations of traveling waves do not converge, yet the Whitham
shock locus continues well past this point. We interpret these
computations as evidence of \textit{inadmissible} Whitham shocks for
the Kawahara equation.  The mere existence of a Whitham shock is not a
sufficient condition to guarantee the existence of a traveling wave
solution.

\begin{figure}[h!]
\begin{center}
\includegraphics[scale=0.4]{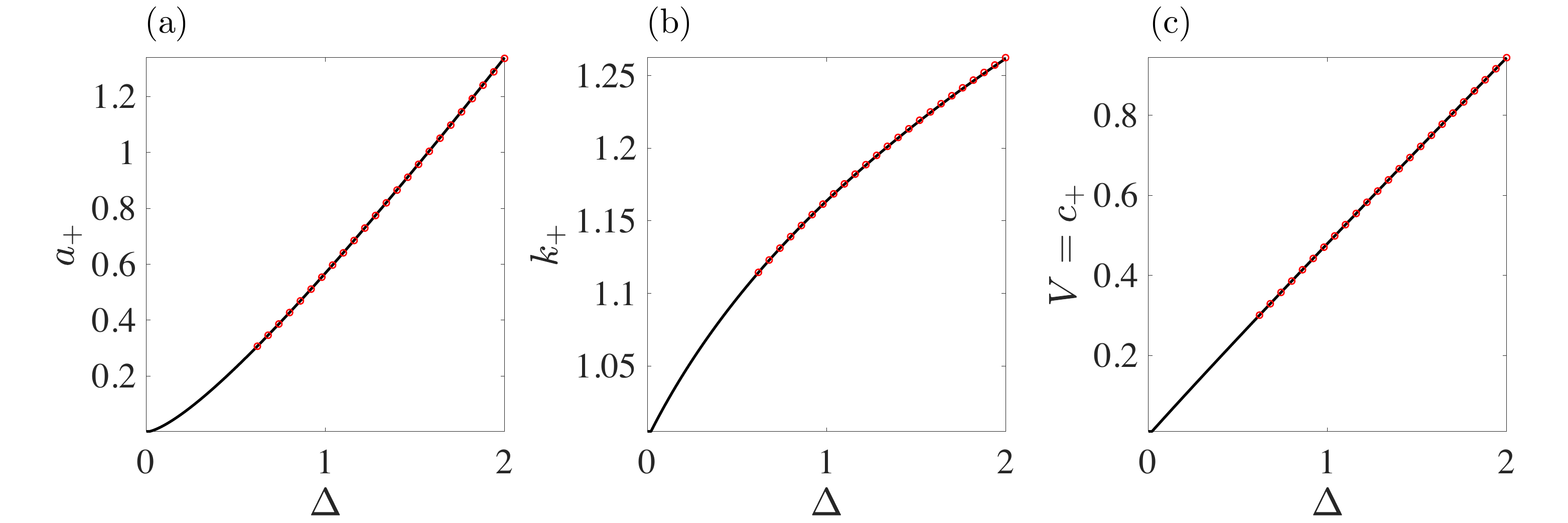}
\caption{(a)-(c)Traveling wave parameter values, $a_+$, $k_+$ and $V$,
  respectively from computations of the jump conditions
  \eqref{eq:kaw_jump1}--\eqref{eq:kaw_jump3} (solid, black curves)
  compared against numerical computations of traveling waves from
  \cite{sprenger_shock_2017} (red
  circles). 
}
\label{fig:kaw_tdsw}
\end{center}
\end{figure}

\subsubsection{Gravity-capillary Whitham equation}

In 1967, Whitham proposed a method to capture the full dispersive
properties of a physical system by using a convolution kernel so that
the linear dispersion relation of a weakly nonlinear model is
prescribed \cite{whitham_variational_1967}. The so-called Whitham
equation takes the form
\begin{align}\label{eq:whitham_eq}
  u_t + uu_x + \mathcal{K}*u_x = 0,
\end{align}
where $\mathcal{K} = \mathcal{F}^{-1}c(k)$, where $c(k)$ is the phase
velocity. For applications to right-moving gravity-capillary water
waves, we take $c(k) = \sqrt{\frac{(1 + Bk^2)\tanh k}{k}}$.

Solutions similar to those appearing in the KdV5 equation
\eqref{eq:kdv5} and the Kawahara equation \eqref{eq:kawahara} are
expected to also persist for the Whitham equation when the Bond number
$B$, is sufficiently close to $1/3$ and third and fifth order
dispersion are in balance. We revisit our original motivation for
investigating traveling wave solutions to the KdV5 equation
\eqref{eq:kdv5} by conducting direct numerical simulations of the
Riemann problem \eqref{eq:riemann_data} for \eqref{eq:whitham_eq} with
$\Delta = 0.25$. The results of the simulations at $t = 2000$ are
shown in Figure \ref{fig:whitham_eq_DSW} for different values of the
Bond number, $B$. The simulations indicate that for $B$ near $1/3$
(e.g., $0.3 \le B \le 0.375$), where third order dispersion is weak,
familiar TDSW-looking dynamics emerge
(cf.~Fig.~\ref{fig:TDSW_intro}). The trailing edge resembles a nearly
uniform traveling wave and the leading portion resembles a partial DSW
terminating into the intermediate nonlinear wavetrain. These numerical
simulations suggest that the Whitham shocks described in this
manuscript are, in fact, quite general. With that being said, the
numerical simulations depicted in Figure \ref{fig:whitham_eq_DSW} for
$B = 0.4$ and $B = 0.425$ no longer resemble TDSWs, instead they look
like classical, convex DSWs. In this regime, lower order dispersion
dominates, preventing the higher order dispersive balance that appears
to be needed for the existence of admissible Whitham shocks. A future
avenue of research will be to understand the transition from solutions
discussed throughout this manuscript to the classical convex, KdV-type
DSW solutions arising from step initial data.

\begin{figure}[h!]
\begin{center}
\includegraphics[scale=0.3]{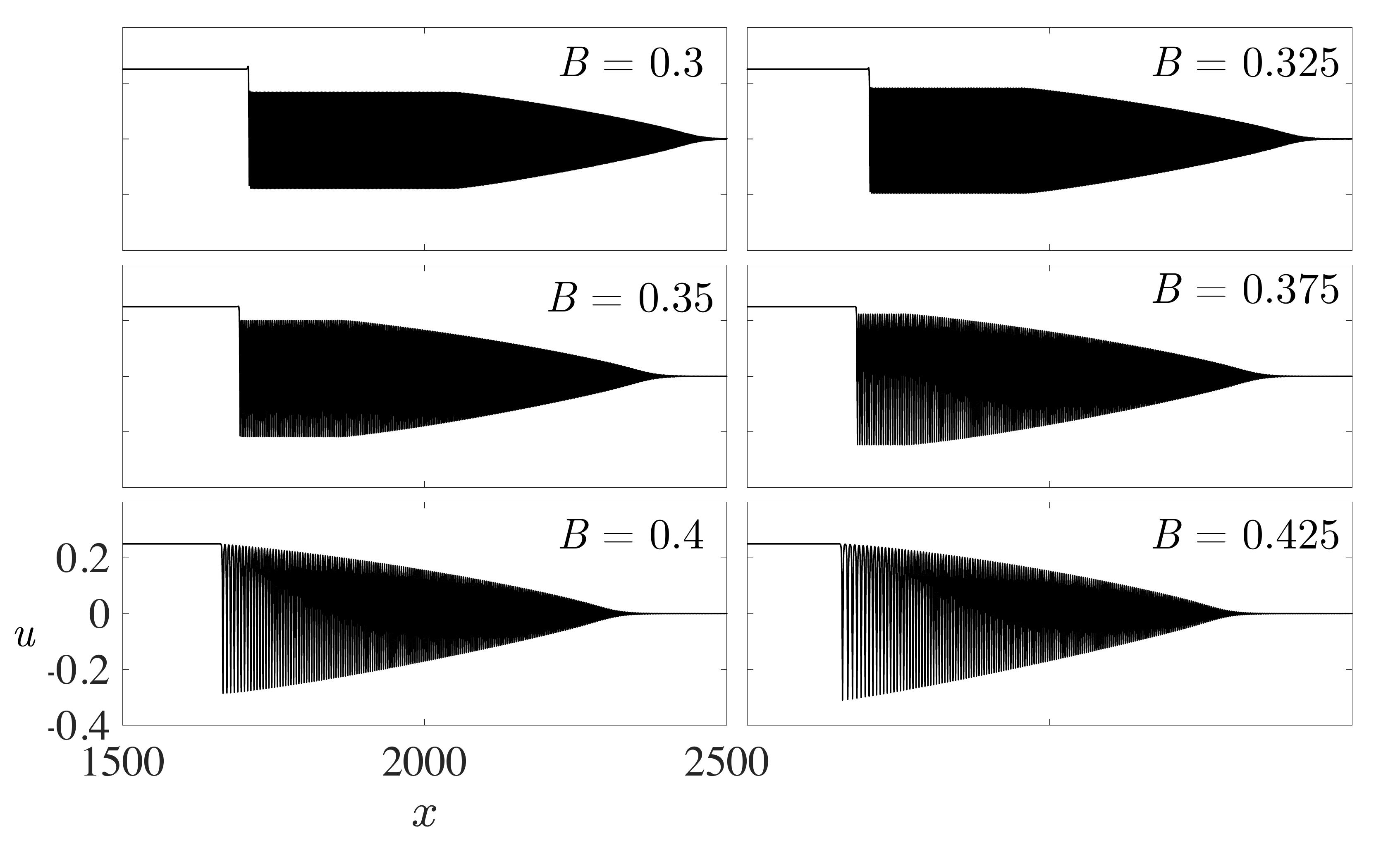}
\caption{Solutions arising from step initial data \eqref{eq:IC} for
  the Whitham equation \eqref{eq:whitham_eq} matching the phase
  velocity, $c(k)$ for linear gravity-capillary water waves. The
  initial step has an amplitude of $\Delta = 0.25$ at $x = 0$ and the
  simulations are shown on the same axes at $t = 2000$ for various
  Bond numbers, $B$, specified in each subfigure.
}\label{fig:whitham_eq_DSW}
\end{center}
\end{figure}

\section{Conclusions}

In this manuscript, we considered the fifth order KdV equation
\eqref{eq:kdv5} and its novel traveling wave solutions. The
heteroclinic TW solutions studied here consist of two periodic waves
that co-propagate and transition between one another on a length scale
commensurate with the length scale of a single wave period. Utilizing
the framework of Whitham modulation theory that describes the zero
dispersion limit of nonlinear oscillatory wavetrains via the Whitham
system of conservation laws, we establish that these heteroclinic
traveling wave solutions limit to discontinuous shock solutions of the
Whitham modulation equations, i.e., \emph{Whitham shocks}.
Within this framework, we prove that a heteroclinic traveling wave
solution of the KdV5 equation \eqref{eq:kdv5} connecting disparate
periodic waves necessarily satisfies the Rankine-Hugoniot jump
conditions of the Whitham equations in conservative form. The jump
conditions reveal \textit{admissible} Whitham shock loci---far-field
periodic wave parameters that are then used as candidate traveling
wave solutions of the governing PDE.

Numerical computations of the characteristic velocities in tandem with
the novel traveling wave profiles allow us to specify the
characteristic structure of Whitham shocks. These computations reveal
that, but for a negligibly small portion of parameter space,
admissible Whitham shocks are undercompressive, i.e., each
characteristic family passes through the shock and therefore violates
the Lax entropy condition.  
The characteristic velocities of the modulation equations also reveal
the stability of heteroclinic traveling wave solutions. Since the
hyperbolicity of the Whitham modulation equations is a necessary
condition for the stability of periodic solutions
\cite{benzoni-gavage_slow_2014}, the stability of traveling wave
solutions considered in this manuscript are determined by the
stability of the periodic waves that constitute them, which can be
readily checked by examining the Whitham shock velocities.

%

A prominent contributing feature to the emergence of novel
heteroclinic traveling waves and admissible Whitham shocks found in
this manuscript is the presence of higher order dispersive
effects. For example, in scalar problems exhibiting nonlocal
dispersion, such as the Whitham equation in
Sec.~\ref{sec:applications}, we demonstrate numerical evidence of the
ubiquity of Whitham shocks in other
systems. 
To further analytically study these heteroclinic traveling waves
connecting two far-field periodic waves, one promising route is to
investigate integrable higher order nonlinear, dispersive systems
\cite{ablowitz_solitons_1981} so that heteroclinic TWs may be
understood in the context of the inverse scattering transform. For
instance, the mathematical structure underlying these traveling waves
could be eludicated though a detailed study of the Lax equation
\cite{wazwaz_partial_2009,lax_integrals_1968}--an integrable equation
with higher order dispersive terms.

%
%

Of related importance is the extensive literature describing
spontaneous, localized pattern formation in \emph{dissipative}
systems, where the prototypical example is the Swift-Hohenberg
equation that leads to a fourth order ODE for stationary solutions
that resemble the TW ODE considered here
\cite{knobloch_spatially_2008,burke_homoclinic_2007,burke_localized_2006,sandstede_stability_1998,beck_snakes_2009,doelman_dynamics_2009}. Here,
stationary solutions of the governing equation were computed in which
an equilibrium state spontaneously transitions to a localized periodic
state and back to equilibrium, the stationary, dissipative analog of
the homoclinic TWs limiting to double Whitham shocks computed
here. The results have since been extended to consider steady
solutions in which a large, localized periodic pattern persists on a
small amplitude oscillatory background
\cite{knobloch_multitude_2019}. A promising avenue for further study
is the consideration of modulation dynamics in the Swift-Hohenburg
equation. Of course, the ensuing dynamics and stability of these
solutions will differ from the dynamics studied here because the
regularizing mechanisms in each case---dissipation or dispersion
respectively---are wholly different. From this point of view, TW
solutions occurring in higher order dispersive systems could serve as
a bridge between conservative Hamiltonian systems and dissipative
pattern forming systems.
\label{sec:conclusion}

\ack 
This work was supported by NSF grant DMS-1816934.  The authors gratefully acknowledge valuable discussions 
with Michael Shearer and Noel Smyth.

\section*{Appendix A}%
In this appendix, we outline numerical methods used throughout this manuscript. 

First, we describe the method used to compute periodic traveling wave solutions to \eqref{eq:kdv5}.  We seek a one parameter family of solutions, $\tilde{\varphi}$ parameterized by an arbitrary wavenumber $\tilde{k}$ and fixed amplitude $\tilde{a} = 1$ and mean $\tilde{\ub} = 0$. For clarity in notation we will continue to denote functions, variables, and parameters associated with the one parameter family with tildes ``$\sim$". The one parameter family of solutions is computed through a pseudospectral Fourier projection of the differential equation \eqref{eq:TW_ode} with $A = 0$ without loss of generality and where $c = \tilde{c}$ is an eigenvalue of the differential operator and used as a continuation variable.

On the interval $\xi \in [-\pi,\pi)$ with periodic boundary conditions the periodic wave is approximated by
\begin{equation}
U_N(\xi) = \sum\limits_{n = -N}^{N-1} a_n e^{ink\xi}
\end{equation}
which yields a nonlinear system of algebraic equations for the Fourier coefficients $a_n$. The nonlinear term is numerically computed in physical space while the linear terms are computed directly via the fast Fourier transform. The resulting nonlinear system is then solved via Matlab's fsolve function. The number of Fourier modes, $N$, is chosen sufficiently large so that Fourier modes decay to $O(10^{-16})$. The nonlinear solution can then be rescaled using the symmetries \eqref{eq:19}, \eqref{eq:20} so that the solution is of unit amplitude and zero mean. The solution library following scaling to unit amplitude and zero mean consists of approximately $6000$ periodic waves non-uniformly spaced on the grid $\tilde{k} \in [0.002, 1.25]$. 

We compute the KdV5-Whtiham modulation equations \eqref{eq:whitham_noncons} by numerically averaging the periodic solutions of the KdV5 equation \eqref{eq:kdv5} over their periods. The averaged integrals, $\tilde{I}_j$ and $\tilde{J}_2$ in \eqref{eq:scaled_ints} are computed using the spectrally accurate trapezoidal rule. We use a finite difference scheme on the unequally spaced grid in $\tilde{k}$ to compute the appropriate gradients of $\tilde{I}_j$ and $\tilde{J}_2$ which are then used to build the matrix $\mathcal{A}$ appearing in Eq. \eqref{eq:whitham_noncons}. The  eigenvalues and eigenvectors of $\mathcal{A}$ are determined directly using Matlab's eig function. Next, we compute the quantities $\mu_j$ in Eq. \eqref{eq:genNL} on the grid $\tilde{k}$ using the aforementioned finite differencing scheme. 

The numerical computation of heteroclinic traveling waves is
implemented using the iterative Newton-conjugate gradient method
\cite{yang_newton_2009,yang_nonlinear_2010}. The numerical
implementation relies upon periodic boundary conditions and projection
onto a Fourier basis. By considering the even reflection of the
Whitham shock at the peak of the right oscillatory wave $\varphi_+$,
the resulting extended structure then satisfies the periodic boundary
condition requirement for the numerical method. An example of the
initial guess for the iterative solver is shown in
Fig. \ref{fig:init_guess}, where intermediate periodic wave wavenumber
is determined by the jump conditions. Outside the oscillatory region
in the initial guess, the solution abruptly transitions to the known
far-field constant value.

\begin{figure}[h!]
\begin{center}
\includegraphics[scale=0.4]{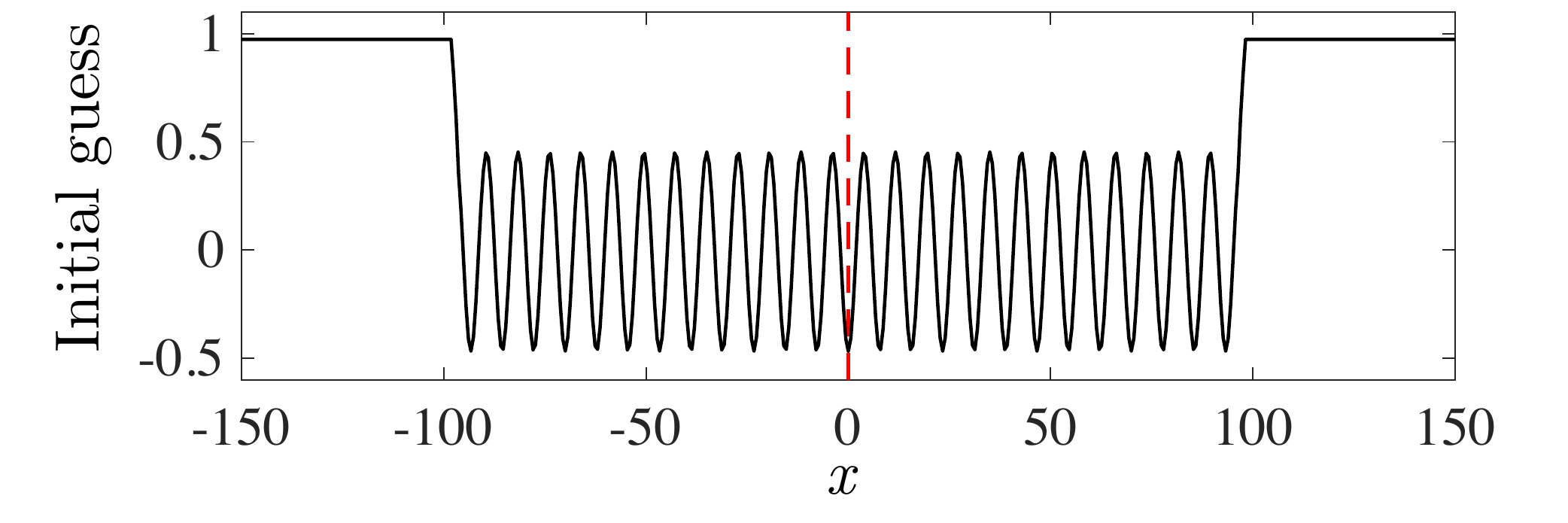}
\caption{Depiction of initial guess given to numerical method with an even reflection across the origin, identified by the vertical, dashed line.}
\label{fig:init_guess}
\end{center}
\end{figure}

Homoclinic traveling waves corresponding to double Whitham shock
solutions of the modulation equations are computed using the fifth
order collocation method bvp5c in Matlab with periodic boundary
conditions imposed on the ODE \eqref{eq:TW_ode}. The solutions are
used as initial contions to KdV5 \eqref{eq:kdv5}, which is numerically
evolved with a pseudospectral spatial discretization via the FFT and
an integrating factor fourth order Runge-Kutta method
\cite{trefethen_spectral_2000} with time step in the range
$\Delta t \in [10^{-4},10^{-3}]$.

\section*{References}

\providecommand{\newblock}{}


\end{document}